\newcommand{\matousek}{Matou{\v s}ek}
\newcommand{\plechac}{Plech{\' a}{\v c}}
\def\CH{{\mathop{\mathrm{CH}}}}
\def\CHX{{\mathop{\cal{CH}}}}
\newcommand{\inter}{\mathop{int}}
\newcommand{\DCH}[1]{\mathop{{#1}\text{-}{\cal CH}}}
\newcommand{\SC}{{\cal S}}
\newcommand{\HL}[1]{{{#1}_L}}
\newcommand{\HR}[1]{{{#1}_R}}
\newcommand{\D}{{\cal D}}
\newcommand{\T}{{\cal T}}
\def\CHO{{\mathop{CH}}}
\def\bd{{\partial}}
\newcommand{\MakeBig}{\rule[-.2cm]{0cm}{0.4cm}}
\newcommand{\MakeHigher}[1]{\rule[#1]{0cm}{0.00cm}}
\newcommand{\sep}[1]{\,\left|\, {#1} \MakeBig\right.}
\newcommand{\brc}[1]{\left\{ {#1} \right\}}
\newcommand{\pbrc}[1]{\left[ {#1{\MakeHigher{-0.2cm}}} \right]}
\newcommand{\pth}[1]{\left( {#1} \right)}
\newcommand{\ceil}[1]{\left\lceil {#1} \right\rceil}
\def\B{{\cal B}}
\def\C{{\cal C}}
\def\D{{\cal D}}
\def\Q{{\cal Q}}
\def\S{{\cal S}}
\newcommand{\Vol}{\mathop{\mathrm{Vol}}}
\newcommand{\sign}{\mathop{\mathrm{sign}}}
\newcommand{\QHull}[1]{\mathrm{\mathop{{#1}\text{-}{co}}}}
\newcommand{\QH}{\QHull{\Q_{sc}}}
\newcommand{\QSC}{\Q_{sc}}
\newcommand{\SQR}{{S'}}
\newcommand{\pspan}{\mathop{pspan}}
\newcommand{\ray}{\mathop{ray}}
\newcommand{\dpair}{{\D\text{-pair}}}
\newcommand{\DPAIRS}[1]{{#1}_{\text{pairs}}}
\newcommand{\QD}{{\cal{Q}}}
\def\def\IPEfile{#}\input{#}1{\def\IPEfile{#1}\input{#1}}
\newcommand{\remove}[1]{}
\newtheorem{theorem}{Theorem}[section] 
\newtheorem{lemma}[theorem]{Lemma}
\newtheorem{definition}[theorem]{Definition}
\newtheorem{corollary}[theorem]{Corollary}
\newenvironment{proof}{{\em Proof:}}{\hfill{\hfill\rule{2mm}{2mm}}}
{\theorembodyfont{\rm} \newtheorem{remark}[theorem]{Remark}}
\renewcommand{\Re}{{\rm I\!\hspace{-0.025em} R}}
\newcommand{\atgen}{\symbol{'100}}
\newcommand{\SarielThanks}[1]{\thanks{Department of Computer
      Science; 
      University of Illinois; 
      201 N. Goodwin Avenue;
      Urbana, IL, 61801, USA;
      {\tt sariel\atgen{}uiuc.edu}; {\tt
         \url{http://www.uiuc.edu/\string~sariel/}.} #1}}
\title{On the Expected Complexity of Random Convex
   Hulls\thanks{This work has been supported by a grant from
      the U.S.--Israeli Binational Science Foundation.  This
      work is part of the author's Ph.D. thesis, prepared at
      Tel-Aviv University under the supervision of Prof.
      Micha Sharir.}}
\author{ Sariel Har-Peled\SarielThanks{} }
\date{ \today }
\begin{document}
\maketitle

\begin{abstract}
    In this paper we present several results on the expected
    complexity of a convex hull of $n$ points chosen
    uniformly and independently from a convex
    shape.     
    
    (i) We show that the expected number of vertices of the
    convex hull of $n$ points, chosen uniformly and
    independently from a disk is $O(n^{1/3})$, and $O( k
    \log{n})$ for the case a convex polygon with $k$ sides.
    Those results are well known (see
    \cite{rs-udkhv-63,r-slcdn-70,ps-cgi-85}), but we
    believe that the elementary proof given here are simpler
    and more intuitive.
    
    (ii) Let $\D$ be a set of directions in the plane, we
    define a generalized notion of convexity induced by
    $\D$, which extends both rectilinear convexity and
    standard convexity.
    
    We prove that the expected complexity of the $\D$-convex
    hull of a set of $n$ points, chosen uniformly and
    independently from a disk, is $O \pth {n^{1/3} +
       \sqrt{n\alpha(\D)}}$, where $\alpha(\D)$ is the
    largest angle between two consecutive vectors in $\D$.
    This result extends the known bounds for the cases of
    rectilinear and standard convexity.

    (iii) Let $\B$ be an axis parallel hypercube in $\Re^d$.
    We prove that the expected number of points on the
    boundary of the quadrant hull of a set $S$ of $n$
    points, chosen uniformly and independently from $\B$ is
    $O(\log^{d-1}n)$. Quadrant hull of a set of points is an
    extension of rectilinear convexity to higher dimensions.
    In particular, this number is larger than the number of
    maxima in $S$, and is also larger than the number of
    points of $S$ that are vertices of the convex hull of
    $S$. 

    Those bounds are known \cite{bkst-anmsv-78}, but we
    believe the new proof is simpler.    
\end{abstract}

\section[Introduction]{Introduction}

Let $C$ be a fixed compact convex shape, and let $X_n$ be a
random sample of $n$ points chosen uniformly and
independently from $C$. Let $Z_n$ denote the number of
vertices of the convex hull of $X_n$.  R{\'e}nyi and Sulanke
\cite{rs-udkhv-63} showed that $E[Z_n] = O(k \log{n})$, when
$C$ is a convex polygon with $k$ vertices in the plane.
Raynaud \cite{r-slcdn-70} showed that expected number of
facets of the convex hull is $O(n^{(d-1)/(d+1)})$, where $C$
is a ball in $\Re^d$, so $E[Z_n] = O(n^{1/3})$ when $C$ is a
disk in the plane.  Raynaud \cite{r-slcdn-70} showed that
the expected number of facets of $\CH(X_n) = ConvexHull(X_n)$ is $O
\pth{(\log(n))^{(d-1)/2}}$, where the points are chosen from
$\Re^d$ by a $d$-dimensional normal distribution. See
\cite{ww-sg-93} for a survey of related results.

All these bounds are essentially derived by computing or
estimating integrals that quantify the probability of two
specific points of $X_n$ to form an edge of the convex hull
(multiplying this probability by $\binom{n}{2}$ gives
$E[Z_n]$). Those integrals are fairly complicated to
analyze, and the resulting proofs are rather long,
counter-intuitive and not elementary.

Efron \cite{e-chrsp-65} showed that instead of arguing about
the expected number of vertices directly, one can argue
about the expected area/volume of the convex hull, and this
in turn implies a bound on the expected number of vertices
of the convex hull. In this paper, we present a new argument
on the expected area/volume of the convex hull (this method
can be interpreted as a discrete approximation to the
integral methods). The argument
goes as follows: Decompose $C$ the into smaller shapes
(called tiles). Using the topology of the tiling and the
underlining type of convexity, we argue about the expected
number of tiles that are exposed by the random convex hull,
where a tile is exposed if it does not lie completely in the
interior of the random convex hull.  Resulting in a lower
bound on the area/volume of the random convex hull.  We
apply this technique to the standard case, and also for
more exotic types of convexity.

In Section \ref{sec:random:ch}, we give a rather simple and
elementary proofs of the aforementioned bounds
$E[Z_n]=O(n^{1/3})$ for $C$ a disk, and $E[Z_n]=O \pth{ k
   \log{n}}$ for $C$ a convex $k$-gon.  We believe that
these new elementary proofs are indeed simpler and more
intuitive\footnote{Preparata and Shamos \cite[pp.
   152]{ps-cgi-85} comment on the older proof for the case
   of a disk: ``Because the circle has no corners, the
   expected number of hull vertices is comparatively high,
   although we know of no elementary explanation of the
   $n^{1/3}$ phenomenon in the planar case.''  It is the
   author's belief that the proof given here remedies this
   situation.} than the previous integral-based proofs.

The question on the expected complexity of the convex hull
remains valid, even if we change our type of convexity.  In
Section \ref{sec:genrelized:convex}, we define a generalized
notion of convexity induced by $\D$, a given set of
directions. This extends both rectilinear convexity, and
standard convexity.  We prove that the expected complexity
of the $\D$-convex hull of a set of $n$ points, chosen
uniformly and independently from a disk, is $O \pth {n^{1/3}
   + \sqrt{n\alpha(\D)}}$, where $\alpha(\D)$ is the largest
angle between two consecutive vectors in $\D$. This result
extends the known bounds for the cases of rectilinear and
standard convexity.

Finally, in Section \ref{sec:hcube}, we deal with another
type convexity, which is an extension of the generalized
convexity mentioned above for the higher dimensions, where
the set of the directions is the standard orthonormal basis
of $\Re^d$. We prove that the expected number of points that
lie on the boundary of the quadrant hull of $n$ points,
chosen uniformly and independently from the axis-parallel
unit hypercube in $\Re^d$, is $O(\log^{d-1} n)$. This
readily imply $O(\log^{d-1} n)$ bound on the expected number
of maxima and the expected number of vertices of the convex
hull of such a point set.  Those bounds are known
\cite{bkst-anmsv-78}, but we believe the new proof is
simpler and more intuitive.


\section[On the Complexity of the Convex Hull of a Random Point
Set]{On the Complexity of the Convex Hull of a Random Point
   Set}
\label{sec:random:ch}

In this section, we show that the expected number of
vertices of the convex hull of $n$ points, chosen uniformly
and independently from a disk, is $O(n^{1/3})$. Applying the
same technique to a convex polygon with $k$ sides, we prove
that the expected number of vertices of the convex hull is
$O( k \log{n})$.\footnote{As already noted, these results are well
known (\cite{rs-udkhv-63,r-slcdn-70,ps-cgi-85}), but we
believe that the elementary proofs given here are simpler
and more intuitive.} The following lemma, shows that the
larger the expected area outside the random
convex hull, the larger is the expected number of vertices
of the convex hull.

\begin{lemma}
    Let $C$ be a bounded convex set in the plane, such that
    the expected area of the convex hull of $n$ points,
    chosen uniformly and independently from $C$, is at least
    $\pth{1-f(n)}Area(C)$, where $1 \geq f(n) \geq 0$, for
    $n \geq 0$. Then the expected number of vertices of the
    convex hull is $\leq n f(n/2)$.

    \label{lemma:area:to:vertices}
\end{lemma}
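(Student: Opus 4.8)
The plan is to use a symmetric halving argument, which is a clean discrete incarnation of Efron's identity relating the expected vertex count to the expected uncovered area. Assume for simplicity that $n$ is even, and split the sample $X_n$ into two groups $Y$ and $Z$, each consisting of $n/2$ of the points. The key observation is that a point $p \in Z$ can be a vertex of $\CH(X_n)$ only if $p \notin \CH(X_n \setminus \{p\})$; since $Y \subseteq X_n \setminus \{p\}$, this forces $p \notin \CH(Y)$. Hence the number of vertices of $\CH(X_n)$ contributed by the points of $Z$ is at most the number of points of $Z$ that lie outside $\CH(Y)$.

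Next I would bound the expected number of points of $Z$ falling outside $\CH(Y)$. Conditioning on the placement of $Y$, each point of $Z$ is an independent uniform sample from $C$, so it lies outside $\CH(Y)$ with conditional probability $\pth{Area(C) - Area(\CH(Y))}/Area(C)$. Taking expectations over $Y$ and invoking the hypothesis applied to the $n/2$ points of $Y$, namely $E\pbrc{Area(\CH(Y))} \geq \pth{1 - f(n/2)} Area(C)$, this probability is at most $f(n/2)$. By linearity of expectation, the expected number of exposed points of $Z$, and therefore the expected number of vertices contributed by $Z$, is at most $(n/2) f(n/2)$.

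Finally, the roles of $Y$ and $Z$ are symmetric, so the identical bound $(n/2) f(n/2)$ holds for the expected number of vertices contributed by $Y$. Since every vertex of $\CH(X_n)$ is a point of either $Y$ or $Z$, summing the two contributions gives $E[Z_n] \leq n f(n/2)$, as claimed. For odd $n$ one splits into groups of sizes $\floor{n/2}$ and $\ceil{n/2}$ and absorbs the discrepancy using monotonicity of $f$, or simply states the bound up to the obvious rounding.

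The step I expect to require the most care is the reduction: justifying rigorously that a vertex drawn from $Z$ must escape $\CH(Y)$, and then verifying that the conditioning on $Y$ is handled correctly so that the points of $Z$ behave as independent fresh samples against the fixed hull $\CH(Y)$. Once that conditional independence is in place, linearity of expectation applies cleanly despite the dependence of $\CH(Y)$ on its own random sample, and the remainder is a direct substitution of the area hypothesis.
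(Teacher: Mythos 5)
Your proof is correct and follows essentially the same route as the paper: the identical halving argument, bounding the vertices contributed by each half by the expected number of its points falling outside the convex hull of the other half, conditioning on that other half, and applying the area hypothesis with parameter $n/2$. Your added justification that a vertex from one half must lie outside the hull of the other half, and your remark on odd $n$, only make explicit what the paper's proof leaves implicit.
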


\begin{proof}
    Let $N$ be a random sample of $n$ points, chosen
    uniformly and independently from $C$. Let $N_1$ (resp.
    $N_2$) denote the set of the first (resp. last) $n/2$
    points of $N$.  Let $V_1$ (resp.  $V_2$) denote the
    number of vertices of $H = \CHO(N_1 \cup N_2)$ that
    belong to $N_1$ (resp.  $N_2$), where $\CHO(N_1 \cup
    N_2) = {\mathrm{ConvexHull}}( N_1 \cup N_2 )$.
    
    Clearly, the expected number of vertices of $C$ is
    $E[V_1] + E[V_2]$. On the other hand,
    \[
    E \pbrc{V_1 \sep{N_2}} \leq \frac{n}{2} \pth{\frac{Area(C) -
          Area(\CHO(N_2))}{Area(C)}},
    \]
    since $V_1$ is bounded by the expected number of points
    of $N_1$ falling outside $\CHO(N_2)$.
    
    We have
    \begin{eqnarray*}
        E[V_1] &=& E_{N_2} \pbrc{ E[V_1|N_2] } \leq E
        \pbrc{\frac{n}{2}
           \pth{\frac{Area(C) - Area(\CHO(N_2))}{Area(C)}}}\\
        &\leq& \frac{n}{2}f(n/2),
    \end{eqnarray*}
    since $E[X]=E_Y[E[X|Y]]$ for any two random variables $X,Y$.  Thus,
    the expected number of vertices of $H$ is $E[V_1] + E[V_2] \leq
    n f(n/2)$.
\end{proof}

\begin{remark}
    Lemma \ref{lemma:area:to:vertices} is known as {\em Efron's
       Theorem}. See \cite{e-chrsp-65}.
\end{remark}

\begin{theorem}
    The expected number of vertices of the convex hull of $n$ points,
    chosen uniformly and independently from the unit disk, is
    $O(n^{1/3})$.
    
    \label{theorem:area}
\end{theorem}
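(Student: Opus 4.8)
By Efron's Theorem (Lemma~\ref{lemma:area:to:vertices}), it suffices to show that the expected area of the convex hull $H$ of $n$ points drawn uniformly from the unit disk is at least $(1 - f(n))\,Area(C)$ with $f(n) = O(n^{-2/3})$; plugging into the lemma then yields $n\cdot O(n^{-2/3}) = O(n^{1/3})$ vertices. So the plan is to \emph{lower-bound the expected area of $H$}, equivalently to \emph{upper-bound the expected area of the region $C \setminus H$} lying outside the hull, and show this outside area is $O(n^{-2/3})$ in expectation.

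The plan is to implement the tiling strategy sketched in the introduction. First I would partition the disk into tiles adapted to the geometry near the boundary: decompose the annular region close to the circle into small ``cap-like'' tiles, where the natural scale is dictated by the known extremal behavior of random points near a smooth boundary. A tile contributes to the outside area $C\setminus H$ only if it is \emph{exposed}, i.e.\ not completely contained in the interior of $H$. The key probabilistic estimate is then: for a tile at ``depth'' roughly $t$ from the boundary (so the cap beyond it has area $\Theta(t)$, and by the $3/2$-power law for circular caps the relevant radial depth scales like $t$ with chord-length $\sim\sqrt{t}$), the probability that no sample point lands in the cap beyond the tile is about $(1-\Theta(t))^n \approx e^{-\Theta(tn)}$. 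When this happens the tile is exposed and contributes its own area (comparable to the cap area $\Theta(t)$) to the outside region.

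Summing the expected outside area amounts to integrating the tile area against its exposure probability over all depths/positions. Concretely, I would bound
\[
E[\,Area(C\setminus H)\,] \;\leq\; \sum_{\text{tiles } \tau} Area(\tau)\cdot\Pr[\tau \text{ exposed}]
\;\lesssim\; \int_{0}^{1} (\text{area at depth } t)\, e^{-c\,t\,n}\,dt,
\]
where the geometry of circular caps makes the area element at depth $t$ scale like $t^{-1/2}\,dt$ (reflecting the $\sqrt{t}$ chord-length of a cap of area $t$). The integral $\int_0^1 t^{1/2} e^{-ctn}\,dt$ (area $\sim t$ times exposure probability, accumulated across the $\sim t^{-1/2}$-many tiles at that depth) evaluates to $\Theta(n^{-3/2})$ per unit arc and $\Theta(n^{-2/3})$ after accounting correctly for the cap scaling, giving $f(n)=O(n^{-2/3})$ as required. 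The main obstacle I anticipate is setting up the tiling so that (a) each exposed tile can be charged a controlled area, and (b) the exposure events are estimated cleanly tile-by-tile without delicate correlations — in particular, choosing the tile dimensions in both the radial and tangential directions so that the cap-area/chord-length relationship of the circle is faithfully captured and the resulting sum telescopes into the target integral. Once the tiling is calibrated to the $3/2$-power law of circular caps, the summation is routine.
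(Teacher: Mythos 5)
Your reduction via Efron's theorem (Lemma \ref{lemma:area:to:vertices}) and the target bound $E[\mathrm{Area}(C\setminus H)]=O(n^{-2/3})$ are exactly right, and a cap-based tiling can in principle deliver them. But the central probabilistic step has a genuine gap as you state it. You bound $\Pr[\tau \text{ exposed}]$ by the probability that ``the cap beyond the tile'' is empty, and you even phrase the implication in the wrong direction (``when this happens the tile is exposed''). What an upper bound requires is the converse: $\tau$ exposed $\Rightarrow$ some specific region is empty of sample points. A tile at depth $h$ is exposed when some supporting halfplane of the hull cuts it, and that halfplane may be tilted arbitrarily; the empty cap it cuts off is \emph{some} cap of area $\Omega(h^{3/2})$ whose chord passes through the tile, not the radially-outward cap over the tile. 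To make your estimate ``clean tile-by-tile'' you must union-bound over a canonical net of cap positions and depths and show that every empty supporting cap contains a canonical cap of comparable area --- this is precisely the non-elementary step (the classical R\'enyi--Sulanke-type computation) that the paper is engineered to avoid, and you flag it as an ``obstacle'' without resolving it, so the crux is missing. Second, your bookkeeping conflates cap area and cap depth: with depth $h$, a circular cap has area $\Theta(h^{3/2})$ and chord $\Theta(\sqrt{h})$, so the correct accumulation is $\int_0^1 e^{-cnh^{3/2}}\,dh=\Theta(n^{-2/3})$ (equivalently $\int_0^1 t^{-1/3}e^{-cnt}\,dt$ with $t$ the cap area); your displayed integral $\int_0^1 t^{1/2}e^{-cnt}\,dt=\Theta(n^{-3/2})$ does not yield $n^{-2/3}$ under any consistent reading of the weights, and ``after accounting correctly for the cap scaling'' is an assertion standing where the derivation should be.

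For contrast, the paper closes exactly this hole with no cap probabilities or union bounds at all: it tiles the disk by an equal-area polar grid ($m=n^{1/3}$ sectors crossed with $m^2$ concentric annuli, every tile of area $\pi/n$), adjoins the origin to the hull, and lets $X_j$ be the index of the first occupied tile in sector $j$, so that $\Pr[X_j=k]\le e^{-(k-1)}$ and $E[X_j]=O(1)$. A purely deterministic geometric argument (the triangle $\triangle opq$ with $p,q$ at depth $\max(X_{j-1},X_{j+1})$ in the two adjacent sectors, plus a short segment crossing only $O(1)$ annuli since each annulus has width at least $1/(2m^2)$) shows that sector $j$ has at most $X_{j-1}+X_{j+1}+O(1)$ exposed tiles; summing gives $O(m)$ exposed tiles in expectation, hence missed area $O(m)\cdot\pi/n=O(n^{-2/3})$. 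If you want to salvage your route, either carry out the canonical-cap union bound honestly, or switch to the paper's per-sector depth variables, which replace the union over supporting-line directions by the geometry of three consecutive sectors.
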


\begin{proof}
    We claim that the expected area of the convex hull of
    $n$ points, chosen uniformly and independently from the
    unit disk, is at least $\pi - O \pth{ n^{-2/3}}$.
    
    Indeed, let $D$ denote the unit disk, and assume without
    loss of generality, that $n=m^3$, where $m$ is a
    positive integer.  Partition $D$ into $m$ sectors,
    $\SC_1, \ldots, \SC_m$, by placing $m$ equally spaced
    points on the boundary of $D$ and connecting them to the
    origin. Let $D_1, \ldots, D_{m^2}$ denote the $m^2$
    disks centered at the origin, such that (i) $D_{1} = D$,
    and (ii) $Area(D_{i-1}) - Area(D_{i}) = \pi/m^2$, for
    $i=2, \ldots, m^2$.  Let $r_i$ denote the radius of
    $D_i$, for $i=1, \ldots, m^2$.
    
    Let $S_{i,j} = (D_{i} \setminus D_{i+1}) \cap \SC_j$,
    and $S_{m^2,j}=D_{m^2} \cap \SC_j$, for $i=1, \ldots,
    m^2-1$, $j=1, \ldots, m$. The set $S_{i,j}$ is called
    the $i$-th {\em tile} of the sector $\SC_j$, and its
    area is $\pi/n$, for $i=1,\ldots,m^2$, $j=1, \ldots,m$.
    
    Let $N$ be a random sample of $n$ points chosen
    uniformly and independently from $D$. Let $X_j$ denote
    the first index $i$ such that $N \cap S_{i, j} \ne
    \emptyset$, for $j=1, \ldots, m$. For a fixed $j \in
    \brc{1,\ldots,m}$, the probability that
    $X_j = k$ is upper-bounded by the probability that the
    tiles $S_{1, j}
, \ldots, S_{(k-1),j}$ do not contain any
    point of $N$; namely, by $\pth{1-\frac{k-1}{n}}^{n}$.
    Thus, $P[X_j = k] \leq \pth{1-\frac{k-1}{n}}^{n} \leq
    e^{-(k-1)}$, since $1-x\leq e^{-x}$, for $x \geq 0$.
    Thus,
    \[
    E \pbrc{ X_j } = \sum_{k=1}^{m^2} k P[X_j = k ] \leq
    \sum_{k=1}^{m^2} k e^{-(k-1)} = O(1),
    \]
    for $j=1, \ldots, m$.
    
    Let $K_o$ denote the convex hull of $N \cup \brc{o}$,
    where $o$ is the origin. The tile $S_{i,j}$ is {\em
       exposed} by a set $K$, if $S_{i,j} \setminus K \ne
    \emptyset$.  We claim that at most $X_{j-1} + X_{j+1} +
    O(1)$ tiles are exposed by $K_o$ in the sector $\SC_j$,
    for $j=1,\ldots, m$ (where we put $X_0 = X_m$, $X_{m+1}
    = X_1$).
    
    \begin{figure}
        \centerline{
           \def\IPEfile{figs/expose.ipe}\input{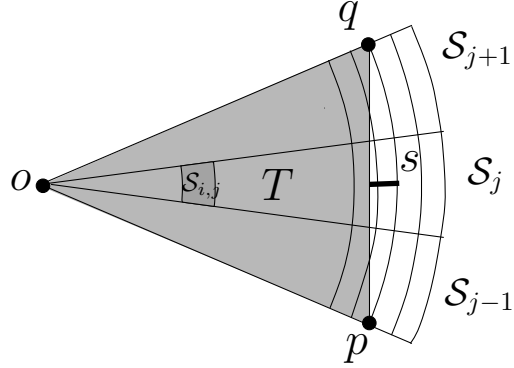}
           }

        \caption{Illustrating the proof that bounds the number of tiles
           exposed by $T$ inside $\SC_j$}
        
        \label{fig:slice}
    \end{figure}
    
    Indeed, let $w=w(N,j) = max(X_{j-1},X_{j+1})$, and let $p,q$ be
    the two points in $S_{j-1,w}, S_{j+1,w}$, respectively,
    such that the number of sets exposed by the triangle $T
    = \triangle{opq}$, in the sector $\SC_i$, is maximal.
    Both $p$ and $q$ lie on $\bd{D_{w+1}}$ and on the
    external radii bounding $\SC_{j-1}$ and $\SC_{j+1}$, as
    shown in Figure \ref{fig:slice}. Clearly, any tile which
    is exposed in $\SC_j$ by $K_o$ is also exposed by $T$.
    Let $s$ denote the segment connecting the middle of the
    base of $T$ to its closest point on $\bd{D_w}$. The
    number of tiles in $\SC_j$ exposed by $T$ is bounded by
    $\max \pth{X_{j-1}, X_{j+1}}$, plus the number of tiles
    intersecting the segment $s$.  The length of $s$ is
    \[
    |oq| - |oq| \cos \pth{ \frac{3}{2} \cdot \frac{2\pi}{m}}
    \leq 1 - \cos \pth{ \frac{3}{2} \cdot \frac{2\pi}{m}}
    \leq \frac{1}{2} \pth{ \frac{3\pi}{m}}^2 =
    \frac{4.5\pi^2}{m^2},
    \]
    since $\cos(x) \geq 1-x^2/2$, for $x \geq 0$.
    
    On the other hand, $r_{i+1} - r_{i} \geq r_i - r_{i-1}
    \geq 1/(2m^2)$, for $i=2,\ldots,m^2$. Thus, the segment
    $s$ intersects at most $\ceil{||s||/(1/(2m^2))} =
    \ceil{9\pi^2} = 89$ tiles, and we have that the number
    of tiles exposed in the sector $\SC_i$ by $K_o$ is at most
    $\max \pth{X_{j-1}, X_{j+1}} + 89 \leq X_{j-1} + X_{j+1}
    + 89$, for $j=1, \ldots, m$.

    Thus, the expected number of tiles exposed by $K_o$ is at most 
    \[
    E \pbrc{ \sum_{i=1}^{m} \pth{ X_{j-1} + X_{j+1} + 89 } } =
    O(m).
    \]
    
    The area of $K =\CHO(N)$ is bounded from below by the
    area of tiles which are not exposed by $K$. The
    probability that $K \subsetneq K_o$ (namely, the origin
    is not inside $K$, or, equivalently, all points of $N$
    lie in some semidisk) is at most $2\pi/2^n$, as easily
    verified. Hence,
    \[
    E[ Area( K ) ] \geq E[ Area( C ) ] - P \pbrc{ C \ne K
       } \pi = \pi - O(m)\frac{\pi}{n} - \frac{2\pi}{2^n}\pi
    = \pi -O \pth{n^{-2/3}}.
    \] 
    
    The assertion of the theorem now follows from Lemma
    \ref{lemma:area:to:vertices}.
\end{proof}

\begin{lemma}
    The expected number of vertices of the convex hull of
    $n$ points, chosen uniformly and independently from the
    unit square, is $O(\log{n})$.

    \label{lemma:vertices:square}
\end{lemma}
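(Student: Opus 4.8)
The plan is to reuse the area-based strategy of Theorem~\ref{theorem:area} together with Efron's theorem (Lemma~\ref{lemma:area:to:vertices}). By that lemma it suffices to prove that the expected area \emph{outside} the convex hull of $n$ uniform points in the unit square $C=[0,1]^2$ is $O(\log n / n)$; taking $f(n)=O(\log n/n)$ then yields $n f(n/2)=O(\log n)$, which is the desired bound. So the whole problem reduces to an upper bound on the expected uncovered area $Area(C)-E\pth{Area(\CHO(N))}$. I would then localize this uncovered region: with overwhelming probability the hull touches all four edges (the failure probability is exponentially small and contributes a negligible term, exactly as the semidisk event does in Theorem~\ref{theorem:area}), so the uncovered region splits into four ``pockets,'' one at each corner, each bounded by two edges and a concave hull chain. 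By the symmetry of the square it then suffices to bound the expected area of a single pocket, say the one at the origin, by $O(\log n/n)$.

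To estimate one pocket I would imitate the tiling and exposed-tile argument of Theorem~\ref{theorem:area}, but adapted to the right-angled corner. The essential new feature is that a corner, unlike the boundary of the disk, has no intrinsic scale: the hull behaves self-similarly at every distance $2^{-k}$ from the corner. This is transparent in the coordinates $u=xy$, $v=\ln(y/x)$, in which the uniform measure on the wedge becomes a constant multiple of Lebesgue measure and the corner region $\{x,y\le 1\}$ becomes $\{0<u\le 1,\ |v|\le \ln(1/u)\}$; the factor $\ln(1/u)$ is precisely the source of the logarithm. Accordingly I would tile the corner neighborhood by tiles of area $\Theta(1/n)$ arranged in $O(\log n)$ geometric ``scales,'' the $k$-th scale covering the points at distance $\approx 2^{-k}$ from the corner and being subdivided into $\approx n\,4^{-k}$ tiles; the subdivision terminates once $2^{-k}\approx n^{-1/2}$, i.e.\ after $O(\log n)$ scales, the innermost region of area $O(1/n)$ being treated as $O(1)$ tiles. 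As in the disk proof, tiles of area $\Theta(1/n)$ guarantee that the index of the first non-empty tile along any ``column'' of a scale has expectation $O(1)$.

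The heart of the argument, and the step I expect to be the main obstacle, is to show that in expectation only $O(1)$ tiles are exposed \emph{per scale}, so that summing over the $O(\log n)$ scales gives $O(\log n)$ exposed tiles in the pocket and hence an uncovered area of $O(\log n)\cdot\Theta(1/n)=O(\log n/n)$. This is the corner analogue of the balance in Theorem~\ref{theorem:area} between the sagitta of the hull chord and the radial thickness of a ring. The difficulty is that here this balance must be re-established separately at each scale, since the effective curvature of the chain at distance $r$ from the corner scales like $1/r$ rather than being constant; moreover the contributions of neighboring scales, and of the two edges meeting at the corner, must be stitched together without double counting. Once the per-scale bound is in place, summing over the four corners and invoking Lemma~\ref{lemma:area:to:vertices} completes the proof.
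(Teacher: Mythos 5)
Your reduction via Efron's theorem (Lemma \ref{lemma:area:to:vertices}) to the bound ``expected uncovered area $= O(\log n/n)$'' is exactly the paper's strategy, and your dyadic, self-similar picture of the corner is a correct heuristic for where the logarithm comes from. But there is a genuine gap, and it sits precisely where you flag it: the claim that in expectation only $O(1)$ tiles per scale are exposed is asserted, not proved, and it \emph{is} the lemma --- everything else in your writeup (Efron, symmetry over corners, the change of variables $u=xy$, $v=\ln(y/x)$, the count of $O(\log n)$ scales) is bookkeeping around it. The change of variables identifies the measure-theoretic source of the $\log$ but gives no mechanism for bounding exposure: whether a tile at scale $k$ is exposed is governed by the hull chain, i.e., by sample points lying in regions far from the tile, and your sketch never produces the witnessing chord (the analogue of the triangle $T=\triangle opq$ in Theorem \ref{theorem:area}) whose sagitta is compared to tile thickness at each scale. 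A secondary imprecision: almost surely the hull touches no edge of the square, so the uncovered region does not split into four corner pockets; it is a connected band containing the entire boundary, and the thin slivers along the edges must be counted as well (the paper absorbs them as the $4n-4$ boundary-adjacent grid squares).

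For comparison, here is how the paper closes exactly your missing step. It tiles the square by a uniform $n\times n$ grid and, for each column $j$, sets $X_j$ (resp.\ $X_j'$) to be the index of the first occupied row in the lateral strip $\SC(1,j-1)$ (resp.\ $\SC(j+1,n)$); the chord joining the two witnessing points shows that the number $Z_j$ of exposed bottom squares of column $j$ satisfies $Z_j \le \max(X_j,X_j') \le X_j + X_j'$. To bound $E[X_j]$ it covers the strip of width $(j-1)/n$ by rectangles of height $\ceil{n/(j-1)}/n$, hence of area at least $1/n$, so the index $Y_j$ of the first nonempty rectangle has $E[Y_j]=O(1)$ and $E[X_j] \le \ceil{n/(j-1)}\, E[Y_j] = O(n/(j-1))$. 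Summing this harmonic series over $j$ and over the four directions gives $O(n\log n)$ exposed grid squares of area $1/n^2$ each, i.e., uncovered area $O(\log n/n)$. Note that grouping the columns dyadically, $j \in [2^k, 2^{k+1})$, the per-column bound $O(1/((j-1)n))$ on exposed area sums to $O(1/n)$ per block --- that is, $O(1)$ tiles of area $1/n$ per scale, which is precisely your unproved per-scale claim. So your outline is sound and fillable, but as written it reduces the lemma to its hardest assertion rather than proving it; the variable-height covering rectangles (or an equivalent device) are the missing ingredient.
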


\begin{figure}
    \centerline{ \def\IPEfile{figs/sq_expose.ipe}\input{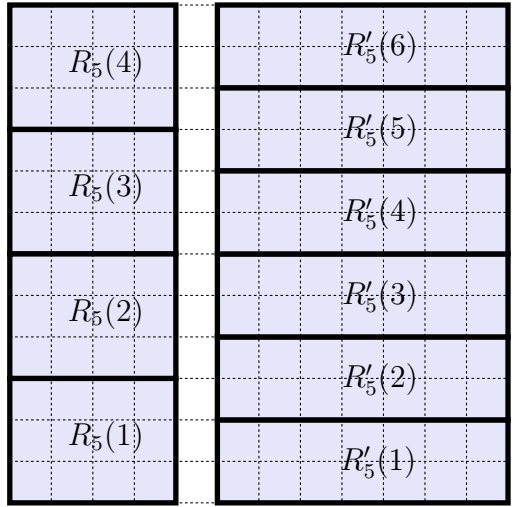} }
    \caption{Illustrating the proof that bounds the number of tiles
       exposed by $\CHO(N)$ inside the $j$-th column, by
       using a non-uniform tiling of the strips to the left
       and to the right of the $j$-th column. The area of
       such a larger tile is at least $1/n$.}
    
    \label{fig:column:ch}
\end{figure}

\begin{proof} 
    We claim that the expected area of the convex hull of
    $n$ points, chosen uniformly and independently from the
    unit square, is at least $1 - O \pth{ \log(n)/ n}$.
    
    Let $S$ denote the unit square. Partition $S$ into $n$
    rows and $n$ columns, such that $S$ is partitioned into
    $n^2$ identical squares.  Let $S_{i,j} = [(i-1)/n,i/n]
    \times [(j-1)/n,j/n]$ denote the $j$-th square in the
    $i$-th column, for $1 \leq i,j \leq n$. Let $\SC_i =
    \cup_{j=1}^{n} S_{i,j}$ denote the $i$-th column of $S$,
    for $i=1,\ldots,n$, and let $\SC(l,k) = \cup_{i=l}^{k}
    \SC_i$, for $1 \leq l \leq k \leq n$.
    
    Let $N$ be a random sample of $n$ points chosen
    uniformly and independently from $S$. Let $X_j$ denote
    the first index $i$ such that $N \cap (\cup_{l=1}^{j-1}
    S_{l, i}) \ne \emptyset$, for $j=2, \ldots, n-1$;
    namely, $X_j$ is the index of the first row in
    $\SC(1,j-1)$ that contains a point from $N$.
    Symmetrically, let $X_{j}'$ be the index of the first
    row in $\SC(j+1,n)$ that contains a point of $N$.
    Clearly, $E[X_j] = E[X_{n-j+1}']$, for $j=2,\ldots,
    n-1$.
    
    Let $Z_j$ denote the number of squares $S_{i,j}$ in the
    bottom of the $j$-th column that are exposed by
    $\CHO(N)$, for $j=2,\ldots, n-1$.  Arguing as in the
    proof of Theorem \ref{theorem:area}, we have that $Z_j
    \leq \max ( X_j, X_j' ) \leq X_j + X_j'$. Thus, in order
    to bound $E[Z_j]$, we first bound $E[X_j]$ by covering
    the strips $\SC(1,j-1), \SC(j+1,n)$ by tiles of area
    $\geq 1/n$. In particular, let $h(l) = \ceil{n/(l-1)}$,
    and let $R_j(m) = [0, (j-1)/n] \times [h(n-j+1)(m-1)/n,
    h(j)m/n ]$, and let $R_j'(m) = [(j+1)/n, 1] \times
    [h(j)(m-1)/n, h(j)m/n ]$, for $j=2, \ldots, n-1$.  See
    Figure \ref{fig:column:ch}.
    
    Let $Y_j$ denote the minimal index $i$ such that $R_j(i)
    \cap N \ne \emptyset$.  The area of $R_{j}(i)$ is at
    least $1/n$, for any $i$ and $j$. Arguing as in the
    proof of Theorem \ref{theorem:area}, it follows that $E[
    Y_j ] = O(1)$. On the other hand, $E[ X_j] \leq h(j)
    E[Y_j] = O( n/(j-1))$.  Symmetrically, $E[X_j'] =
    O(n/(n-j))$.
    
    Thus, by applying the above argument to the four
    directions (top, bottom, left, right), we have that the
    expected number of squares $S_{i,j}$ exposed by
    $\CHO(N)$ is bounded by
    \[
    4n - 4 + 4 { \sum_{j=2}^{n-1} E[Z_j]} < 4n + 4 {
       \sum_{j=2}^{n-1} (E[X_j] + E[X_j'])} = 4n + 8 {
       \sum_{j=2}^{n-1} O\pth{\frac{n}{j-1}} } =
    O(n\log{n}),
    \]
    where $4n - 4$ is the number of squares adjacent to the
    boundary of $S$.
    
    Since the area of each square is $1/n^2$, it follows
    that the expected area of $\CHO(N)$ is at least $1 -
    O(\log(n)/n)$.
    
    By Lemma \ref{lemma:area:to:vertices}, the expected
    number of vertices of the convex hull is $O(\log n)$.
\end{proof}

\begin{lemma}
    The expected number of vertices of the convex hull of
    $n$ points, chosen uniformly and independently from a
    triangle, is $O(\log{n})$.
    
    \label{lemma:triangle}
\end{lemma}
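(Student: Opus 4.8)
The plan is to combine Efron's Theorem (Lemma \ref{lemma:area:to:vertices}) with a tiling argument of exactly the same flavor as in Theorem \ref{theorem:area} and Lemma \ref{lemma:vertices:square}. It suffices to show that the expected area of $\CHO(N)$ is at least $\pth{1 - O\pth{\log(n)/n}}\,Area(T)$, where $N$ is a sample of $n$ points chosen uniformly and independently from the triangle $T$; Lemma \ref{lemma:area:to:vertices} then yields the claimed $O(\log n)$ bound on the expected number of vertices, since $n \cdot O(\log(n/2)/(n/2)) = O(\log n)$. Because the number of vertices of a convex hull, the uniform distribution, and ratios of areas are all preserved by invertible affine maps, I may assume without loss of generality that $T$ is any convenient triangle, say the right triangle with a horizontal base.

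First I would tile $T$ into $\Theta(n^2)$ tiles, each of area $\Theta(1/n^2)$, and bound the expected area deficit $Area(T) - Area(\CHO(N))$ by $O(1/n^2)$ times the expected number of \emph{exposed} tiles, i.e.\ tiles not contained in $\CHO(N)$ (plus a negligible term for the event that $\CHO(N)$ fails to contain a fixed interior point, which happens with exponentially small probability, exactly as in Theorem \ref{theorem:area}). Thus the whole task reduces to showing that the expected number of exposed tiles is $O(n \log n)$.

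Next, for each of the three edges $e$ of $T$ I would run the column argument of Lemma \ref{lemma:vertices:square} ``towards $e$.'' Setting up columns perpendicular to $e$ and letting $Z_j$ be the number of tiles in column $j$ adjacent to $e$ that are exposed, the convexity argument of Theorem \ref{theorem:area} gives $Z_j \le \max(X_j, X_j') \le X_j + X_j'$, where $X_j$ (resp.\ $X_j'$) measures how high the hull is forced to rise above $e$ by the first point appearing in the strip lying between column $j$ and the corner of $T$ at one end of $e$ (resp.\ the other). To bound $E[X_j]$ I would re-tile each such strip by larger tiles of area $\ge 1/n$, with heights growing like $n$ divided by the distance from column $j$ to the corresponding corner, so that the first occupied tile appears within an expected $O(1)$ steps, yielding $E[X_j] = O\pth{n/(j-1)}$. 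Summing over the columns produces a harmonic sum $\sum_j O(n/(j-1)) = O(n\log n)$ per edge, and summing the three edges (together with the $O(n)$ tiles meeting $\partial T$) gives $O(n\log n)$ exposed tiles in expectation, as required.

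The main obstacle, and the only place the proof genuinely differs from the square, is the taper of the triangle: the columns vary in height and the strips adjacent to a column are triangular or trapezoidal rather than rectangular, so the re-tiling into area-$\ge 1/n$ tiles and the harmonic bookkeeping must be carried out with geometry that changes from column to column. This should only help, since shorter columns near the apex contain fewer tiles, but care is needed to confirm that each of the three corners contributes just a single $\log n$ factor and that no tile is charged in a way that inflates the total beyond $O(n \log n)$.
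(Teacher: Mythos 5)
Your overall strategy---Efron's theorem (Lemma \ref{lemma:area:to:vertices}) plus a per-edge exposed-tile count of $O(n\log n)$ via harmonic sums---is exactly the paper's, but your decomposition is genuinely different, and the difference lands precisely on the step you defer. The paper does not use columns perpendicular to an edge at all: it partitions $T$ into $n$ equal-area thin triangles by segments emanating from a fixed vertex, and slices each of these by segments \emph{parallel to the opposite side} into $n$ equal-area trapezoids of area exactly $1/n^2$, repeating this construction once for each vertex. This tiling is combinatorially identical to the $n \times n$ grid of Lemma \ref{lemma:vertices:square}: the row boundaries are level sets of an affine function, so the chord bound $Z_j \leq \max(X_j, X_j')$ transfers unchanged, and the re-tiling of a strip into blocks of $h(j) = \ceil{n/(j-1)}$ consecutive rows has area exactly $\geq 1/n$ with no clipping whatsoever. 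The square proof then applies verbatim---that is the entire content of the paper's proof. Your Cartesian columns instead force every strip and tile to be clipped by the slanted sides, which is exactly the ``taper'' you flag at the end; the paper's choice of tiling is designed to make that issue vanish, and choosing it is the one idea your proposal is missing.

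The deferred care is a genuine (though repairable) gap, not a formality, because clipping does break the bounds you quote in at least one regime. For the strip running from column $j$ toward an acute corner along the edge $e$, the strip's area is $\Theta\pth{((n-j)/n)^2}$, so when $n-j = O(\sqrt{n})$ the strip is empty of sample points with \emph{constant} probability; then there is no point to anchor the chord, $X_j'$ is undefined, and $Z_j$ can only be bounded by the full column height. (One recovers $O(n)$ total extra exposed tiles by capping $Z_j$ at the column height and summing over these $j$, but this must be said.) Similarly, the geometric-decay bound $P[Y_j \geq k] \leq e^{-(k-1)}$ holds only for the roughly $j$ big tiles of the strip that are unclipped and hence genuinely of area $\geq 1/n$; beyond them the decay stalls, and one must cap $X_j$ by the strip's row count and absorb an exponentially small tail times $n$. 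Note also that affine normalization helps less than you suggest: you can normalize to one convenient triangle, but its three edges then have different local geometry (right-angle corner, acute corners, hypotenuse), so the per-edge bookkeeping is not symmetric and each case needs checking. In short, your route is completable with this extra case analysis, whereas the paper's fan tiling reduces the whole lemma to the observation that the tiling has the same topology as the square's, making the proof essentially one paragraph.
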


\begin{proof}
    We claim that the expected area of the convex hull of
    $n$ points, chosen uniformly and independently from a
    triangle $T$, is at least $(1 - O \pth{ \log(n)/ n})
    Area(T)$. We adapt the tiling used in Lemma
    \ref{lemma:vertices:square} to a triangle. Namely, we
    partition $T$ into $n$ equal-area triangles, by segments
    emanating from a fixed vertex, each of which is then
    partitioned into $n$ equal-area trapezoids by segments
    parallel to the opposite side, such that each resulting
    trapezoid has area $1/n^2$. See Figure
    \ref{fig:triangle:tiling}.
    
    Notice that this tiling has identical topology to the
    tiling used in Lemma \ref{lemma:vertices:square}. Thus,
    the proof of Lemma \ref{lemma:vertices:square} can be
    applied directly to this case, repeating the tiling
    process three times, once for each vertex of $T$. This
    readily implies the asserted bound.
\end{proof}

\begin{figure}
    \centerline{
       \def\IPEfile{figs/tri_expose.ipe}\input{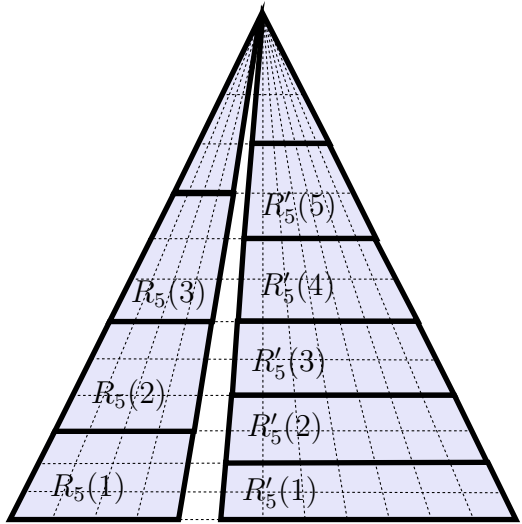}
       }
    
    \caption{Illustrating the proof of Lemma 
       \ref{lemma:vertices:square} for the case of a
       triangle.}
    
    \label{fig:triangle:tiling}
\end{figure}

\begin{theorem}
    The expected number of vertices of the convex hull of
    $n$ points, chosen uniformly and independently from a
    polygon $P$ having $k$ sides, is $O(k \log{n})$.
\end{theorem}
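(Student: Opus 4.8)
The plan is to reduce, via Lemma~\ref{lemma:area:to:vertices} (Efron's Theorem), to a lower bound on the expected area of the random convex hull, and to obtain that bound by triangulating the convex polygon $P$ and invoking the triangle estimate of Lemma~\ref{lemma:triangle} on each piece. Fix an interior point $o$ of $P$ and triangulate $P$ into the $k$ triangles $T_j = \triangle o v_j v_{j+1}$, for $j = 1, \ldots, k$, where $v_1, \ldots, v_k$ are the vertices of $P$ (indices taken modulo $k$), so that the base $v_j v_{j+1}$ of $T_j$ lies on $\bd P$. Write $N$ for the random sample, $K = \CHO(N)$, and $p_j = Area(T_j)/Area(P)$. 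It suffices to prove $E\pbrc{Area(K)} \geq \pth{1 - O(k\log n/n)}Area(P)$: taking $f(n) = O(k\log n/n)$ in Lemma~\ref{lemma:area:to:vertices} then gives the asserted $O(k\log n)$ bound on the expected number of vertices. (When the constant forces $f(n) > 1$, i.e.\ $k$ is large, the claim is vacuous, since $K$ has at most $n = O(k\log n)$ vertices.)

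The crux is to bound the expected area of $P$ that $K$ leaves uncovered inside each $T_j$. First I would note that since $T_j$ is convex we have $\CHO(N\cap T_j) \subseteq T_j$, while $K \supseteq \CHO(N\cap T_j)$; hence the area of $T_j$ exposed by $K$ is at most the area of $T_j$ exposed by $\CHO(N\cap T_j)$. In other words, throwing away the sample points outside $T_j$ can only enlarge the uncovered region, which is exactly the direction we need for an upper bound. Conditioned on $\abs{N\cap T_j} = m$, the points of $N\cap T_j$ are uniform and independent in $T_j$, so Lemma~\ref{lemma:triangle}, applied to the triangle $T_j$ with $m$ points, yields
\[
E\pbrc{ Area(T_j) - Area\pth{\CHO(N\cap T_j)} \sep{\abs{N\cap T_j} = m} } \;\leq\; c\,\frac{\log m}{m}\,Area(T_j)
\]
for an absolute constant $c$ and every $m \geq 2$, the same conditional expectation being trivially at most $Area(T_j)$ for all $m$. (The constant here is independent of the shape of $T_j$, since the tiling of Lemma~\ref{lemma:triangle} is defined purely through equal areas and is therefore affine invariant, which matters because a thin polygon forces thin triangles.) Thus the expected uncovered area inside $T_j$ is at most $Area(T_j)\,E\pbrc{\varphi(M_j)}$, where $M_j = \abs{N\cap T_j} \sim \mathrm{Bin}(n,p_j)$ and $\varphi(m) = \min\pth{1, c\log m/m}$.

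It remains to show each triangle contributes $O\pth{Area(P)\log n/n}$, so that the $k$ triangles together leave uncovered area $O\pth{k\,Area(P)\log n/n}$. I would split on $E\pbrc{M_j} = np_j$. If $np_j \geq 2$, a Chernoff bound gives $M_j \geq np_j/2$ outside an event of probability $e^{-\Omega(np_j)}$; on the good event $\varphi(M_j) = O\pth{\log n/(np_j)}$, and the bad event contributes at most $e^{-\Omega(np_j)} = O\pth{\log n/(np_j)}$, so $Area(T_j)\,E\pbrc{\varphi(M_j)} = p_j\,Area(P)\cdot O\pth{\log n/(np_j)} = O\pth{Area(P)\log n/n}$. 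If instead $np_j < 2$, then $Area(T_j) = p_j\,Area(P) < 2\,Area(P)/n$, so bounding the uncovered area of $T_j$ trivially by $Area(T_j)$ already gives $O\pth{Area(P)/n}$. Summing over the $k$ triangles gives $E\pbrc{Area(P) - Area(K)} = O\pth{k\,Area(P)\log n/n}$, and the theorem follows from Lemma~\ref{lemma:area:to:vertices}. The main obstacle is exactly this final accounting: because the sample is uniform over $P$ and not over the individual triangles, the counts $M_j$ are random and may be small for thin triangles, and one must check that any such triangle is itself of negligible area, so that its (possibly complete) exposure cannot spoil the $O(k\log n/n)$ estimate.
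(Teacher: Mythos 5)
Your proposal is correct, but it reaches the bound by a genuinely different reduction than the paper's, even though both begin by triangulating $P$ into $k$ triangles and both ultimately rest on Lemma~\ref{lemma:triangle}. The paper never applies Efron's theorem (Lemma~\ref{lemma:area:to:vertices}) to $P$ itself: it bounds the vertex count directly via subadditivity of extreme points --- every vertex of $\CHO(N)$ lying in $T_i$ is an extreme point of $N_i = N \cap T_i$, hence a vertex of $\CHO(N_i)$, so $|\CHO(N)| \leq \sum_i |\CHO(N_i)|$ --- and then conditions on $Y_i = |N_i|$ to get $E[Z_i \mid Y_i] = O(\log Y_i) = O(\log n)$. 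Because that conditional bound is monotone increasing in $Y_i$, no concentration argument is needed at all. You instead run Efron once at the top level, which forces you to prove the area-deficiency bound $E\pbrc{Area(P) - Area(K)} = O\pth{k \, Area(P) \log n / n}$; this in turn requires (i) the area form of Lemma~\ref{lemma:triangle}, which is established inside its proof rather than in its statement, (ii) the affine-invariance observation making the constant independent of the triangle's shape --- a point you correctly flag and which the paper's route also implicitly relies on, and (iii) the Chernoff split on $np_j$ together with the thin-triangle case $np_j < 2$, since your bound $\log m/m$ degrades for small $m$ while the paper's $\log m$ does not. Your route buys the area statement itself, which is of independent interest and is what Lemma~\ref{lemma:area:to:vertices} formally asks for; the paper's route buys brevity and extra generality: since it avoids Efron at the polygon level, it works verbatim for non-convex simple polygons triangulated arbitrarily, whereas your fan triangulation and top-level use of Lemma~\ref{lemma:area:to:vertices} require $P$ convex (consistent with the abstract, which states the result for convex polygons). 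One microscopic wrinkle: as written, $\varphi(m) = \min(1, c \log m / m)$ evaluates to $0$ at $m = 1$, where the uncovered fraction is actually $1$; you should set $\varphi(m) = 1$ for $m \leq 1$, as your own ``trivially at most $Area(T_j)$'' hedge indicates. This changes nothing downstream, since on your good event $M_j = 1$ can occur only when $np_j \leq 4$, where $O(\log n/(np_j))$ already exceeds a constant.
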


\begin{proof}
    We triangulate $P$ in an arbitrary manner into $k$
    triangles $T_1, \ldots, T_k$. Let $N$ be a random sample
    of $n$ points, chosen uniformly and independently from
    $P$. Let $Y_i = |T_i \cap N|$, $N_i = T_i \cap N$, and
    $Z_i = |\CHO(N_i)|$, for $i=1, \ldots, k$. Notice that
    the distribution of the points of $N_i$ inside $T_i$ is
    identical to the distribution of $Y_i$ points chosen
    uniformly and independently from $T_i$. In particular,
    $E[Z_i | Y_i] = O( \log{Y_i})$, by Lemma
    \ref{lemma:triangle}, and $E[ Z_i ] = E_{Y_i}[ E[ Z_i |
    Y_i ] ] = O( \log{n } )$, for $i=1, \ldots, k$.
    
    Thus, $E[ |\CHO(N)| ] \leq E \pbrc{ \sum_{i=1}^{k}
       |\CHO(N_i)| } \leq \sum_{i=1}^{k} E[ Z_i] = O( k
    \log{n} )$.
\end{proof}

\section{On the Expected Complexity of a
   Generalized Convex Hull Inside a Disk}
\label{sec:genrelized:convex}

In this section, we derive a bound on the expected
complexity on a generalized convex hull of a set of points,
chosen uniformly and independently for the unit disk. The
new bound matches the known bounds, for the case of standard
convexity and maxima. 
The bound follows by extending the proof of  Theorem
\ref{theorem:area}.

We begin with some terminology and some initial
observations, most of them taken or adapted from
\cite{mp-ofsch-97}.  A set $\D$ of vectors in the plane is a
{\em set of directions}, if the length of all the vectors in
$\D$ is $1$, and if $v \in \D$ then $-v \in \D$. Let
$\D_{\Re}$ denote the set of all possible directions.  A set
$C$ is {\em $\D$-convex} if the intersection of $C$ with any
line with a direction in $\D$ is connected.  By definition,
a set $C$ is convex (in the standard sense), if and only if
it is $\D_{\Re}$-convex.

For a set $C$ in the plane, we denote by $\CHX_{D}(C)$ the
{\em $\D$-convex hull} of $C$; that is, the smallest
$\D$-convex set that contains $C$.  While this seems like a
reasonable extension of the regular notion of convexity, its
behavior is counterintuitive. For example, let $\D_Q$ denote
the set of all rational directions (the slopes of the
directions are rational numbers). Since $\D_Q$ is dense in
$\D_{\Re}$, one would expect that $\CHX_{\D_Q}(C) =
\CHX_{\D_\Re}(C) = \CH(C)$. However, if $C$ is a set of
points such that the slope of any line connecting a pair of
points of $C$ is irrational, then $\CHX_{D_Q}(C) =C$.  See
\cite{osw-dcrch-85, rw-cgro-88, rw-ocfoc-87} for further
discussion of this type of convexity.

\begin{definition}
    Let $f$ be a real function defined on a $\D$-convex set
    $C$. We say that $f$ is {\em $\D$-convex} if, for any $x
    \in C$ and any $v \in \D$, the function $g(t)=f(x+tv)$
    is a convex function of the real variable $t$. (The
    domain of $g$ is an interval in $\Re$, as $C$ is assumed
    to be $\D$-convex.)
   
    Clearly, any convex function, in the standard sense,
    defined over the whole plane satisfies this condition.
\end{definition}

\begin{definition}
    Let $C \subseteq \Re^2$. The set $\CHX^\D(C)$, called
    the {\em functional $\D$-convex hull of $C$}, is defined
    as
    \[
    \CHX^\D(C) = \brc{ x\in \Re^2 \sep{ f(x) \leq \sup_{y\in
             C}f(y) \text{ for all $\D$-convex } f:\Re^2
          \rightarrow \Re}}
    \]
    A set $C$ is {\em functionally $\D$-convex} if
    $C=\CHX^{\D}(C)$.
\end{definition}

\begin{definition}
    Let $\D$ be a set of directions. A pair of vectors
    $v_1,v_2 \in \D$, is a {\em $\dpair$}, if $v_2$ is
    counterclockwise from $v_1$, and there is no vector in
    $\D$ between $v_1$ and $v_2$. Let $\DPAIRS{\D}$ denote
    the set of all $\dpair$s.  Let $\pspan(u_1,u_2)$ denote
    the portion of the plane that can be represented as a
    {\em positive} linear combination of $u_1, u_2 \in \D$.
    Thus $\pspan( u_1,u_2)$ is the {\em open} wedge bounded
    by the rays emanating from the origin in directions
    $u_1, u_2$. We define by $(v_1,v_2)_L = \pspan( -v_1,
    v_2)$ and $(v_1,v_2)_R = \pspan( v_1, -v_2)$: these are
    two of the four quadrants of the plane induced by the
    lines containing $v_1$ and $v_2$. Similarly, for $v \in
    \D$ we denote by $\HL{v}$ and $\HR{v}$ the two open
    half-planes defined by the line passing through $v$.
    Let
   \[
   \QD(\D) = \brc{ \HL{v}, \HR{v} \sep{ v \in \D}} \cup
   \brc{ (v_1,v_2)_R, (v_1,v_2)_L \sep{ (v_1, v_2) \in
         \DPAIRS{D} }}.
   \]
\end{definition}

\begin{definition}
    For a set $S \subseteq \Re^2$ we denote by $T(S)$ the
    set of translations of $S$ in the plane, that is $T(S) =
    \brc{ S + p \sep{ p \in \Re^2 }}$. Given a set of
    directions $\D$, let $\T(\D) = \bigcup_{Q \in \QD(\D)}
    T(Q)$.  
\end{definition}

For $\D_\Re$, the set $\T(\D_\Re)$ is the set of all open
half-planes.  The standard convex hull of a planar point set
$S$ can be defined as follows: start from the whole plane,
and remove from it all the open half-planes $H^+$ such that
$H^+ \cap S = \emptyset$.  We extend this definition to
handle $\D$-convexity for an arbitrary set of directions
$\D$, as follows:
\[
\DCH{\D}(S) = \Re^2 \setminus \pth{ \bigcup_{I \in \T(\D), I
      \cap S = \emptyset} I };
\]
that is, we remove from the plane all the translations of
quadrants and halfplanes in $\QD(\D)$ that do not contain a
point of $S$. See Figures \ref{fig:dch:example},
\ref{fig:dch:example:ext}.

\begin{figure}
    \centerline{ \def\IPEfile{figs/set-of-dir.ipe}\input{figs/set-of-dir.ipe} }
    \caption{(a) A set of directions $\D$, (b) the set of
       quadrants $\QD(\D)$ induced by $\D$, and (c) the
       $\DCH{\D}$ of three points.}
    
    \label{fig:dch:example}
\end{figure}

\begin{figure}
    \centerline{ \def\IPEfile{figs/set-of-dir-discon.ipe}\input{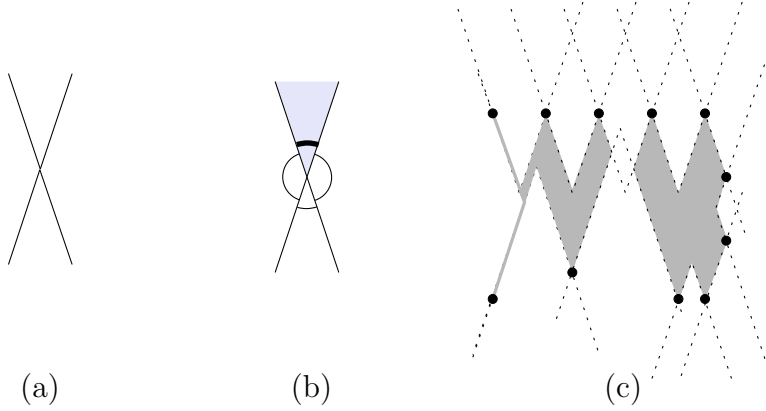} }
    \caption{(a) A set of directions $\D$, such that $\alpha(\D) > \pi/2$,
       (b) the set of quadrants $\QD(\D)$ induced by $\D$,
       and (c) the $\DCH{\D}$ of a set of points which is
       not connected.}
    
    \label{fig:dch:example:ext}
\end{figure}

For the case $\D_{xy} = \brc{ (0,1), (1,0), (0,-1), (-1,0)
   }$, \matousek{} and \plechac{} \cite{mp-ofsch-97} showed
that if $\DCH{\D_{xy}}(S)$ is connected, then
$\CHX^{\D_{xy}}(S) = \DCH{\D_{xy}}(S)$.

\begin{definition}
    For a set of directions $\D$, we define the
    {\em density} of $\D$ to be
    \[
    \alpha(\D) = \max_{(v_1,v_2) \in \DPAIRS{\D}}
    \alpha(v_1,v_2),
    \]
    where $\alpha(v_1,v_2)$ denotes the counterclockwise
    angle from $v_1$ to $v_2$.
\end{definition}

See Figure \ref{fig:dch:example:ext}, for an example of a
set of directions with density larger than $\pi/2$.

\begin{corollary}
    Let $\D$ be a set of directions in the plane. Then:
    \begin{itemize}
        \item The set $\DCH{\D}(A)$ is $\D$-convex, for any
        $A \subseteq \Re^2$.
        
        \item For any $A \subseteq B \subseteq \Re^2$, one
        has $\DCH{\D}(A) \subseteq \DCH{\D}(B)$.
        
        \item For two sets of directions $\D_1 \subseteq
        \D_2$ we have $\DCH{\D_1}(S) \subseteq
        \DCH{\D_2}(S)$, for any $S \subseteq \Re^2$.
        
        \item Let $S$ be a bounded set in the plane, and let
        $\D_1 \subseteq \D_2 \subseteq \D_3 \cdots$ be a
        sequence of sets of directions, such that
        $\lim_{i\rightarrow \infty} \alpha(D_i) = 0$. Then,
        $\inter{\CH(S)} \subseteq \lim_{i \rightarrow
           \infty} \DCH{\D_i}(S) \subseteq \CH(S)$.
        \remove{
           \item Let $S$ be a finite set of points in general
           position in the plane, and let $\D_1 \supseteq \D_2
           \supseteq \D_3 \cdots$ be a sequence of sets of
           directions, such that $\lim_{i\rightarrow \infty}
           \alpha(D_i) = \pi$. Then, $\lim_{i \rightarrow
              \infty} \DCH{\D_i}(S) = S$.}
    \end{itemize}
\end{corollary}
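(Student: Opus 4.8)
The plan is to work throughout with the complementary ``forbidden region'' description $\DCH{\D}(S) = \Re^2 \setminus U_S^{\D}$, where $U_S^{\D} = \bigcup\{\, I : I \in \T(\D),\ I \cap S = \emptyset \,\}$ is the union of all translates of the cones in $\QD(\D)$ that miss $S$; each assertion then becomes a statement about $U_S^{\D}$. With this the second item is immediate: if $A \subseteq B$ then every $I$ missing $B$ also misses $A$, so $U_B^{\D} \subseteq U_A^{\D}$ and hence $\DCH{\D}(A) \subseteq \DCH{\D}(B)$.

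For the first item I would isolate one structural lemma: for every $Q \in \QD(\D)$ and every line $\ell$ whose direction lies in $\D$, the set $Q \cap \ell$ is empty, a ray, or all of $\ell$, but never a bounded segment. For a halfplane $\HL{v}$ or $\HR{v}$ this is clear. For a quadrant $\pspan(v_1,-v_2)$ arising from a $\dpair$ $(v_1,v_2)$, the recession cone is the closed wedge spanned by $v_1$ and $-v_2$; a short angle computation shows that the directions lying in neither this wedge nor its antipode form exactly the open arc strictly between $v_1$ and $v_2$ together with its antipodal arc, and these contain no element of $\D$ precisely because $(v_1,v_2)$ is a \emph{consecutive} pair. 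Hence for $v \in \D$ either $v$ or $-v$ is a recession direction, forcing a ray (or line, or $\emptyset$) — this is the crux. Since translation preserves the property, $\DCH{\D}(A) \cap \ell = \ell \setminus \bigcup_I (I \cap \ell)$ is a line minus a union of open rays and whole lines, i.e.\ an intersection of closed rays, hence an interval; so $\DCH{\D}(A)$ is $\D$-convex.

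The third item is the main obstacle, because the naive idea of covering each $\D_2$-cone by $\D_1$-cones contained in it is \emph{false}: a wide $\D_2$-quadrant need not be tiled from inside by the coarser, narrower $\D_1$-quadrants. The fix is to exploit $S$-avoidance. Given $x \notin \DCH{\D_2}(S)$, choose $I_2 \ni x$ missing $S$; writing $I_2 = q + C$ with $C \in \QD(\D_2)$, cone additivity gives $x + C \subseteq I_2$, still missing $S$, so we may assume the apex sits at $x$. I would then produce a $\D_1$-cone $C' \in \QD(\D_1)$ whose direction arc lies inside that of $C$: taking the $\D_1$-$\dpair$ $(u_1,u_2)$ whose gap contains the direction $v_1$, the quadrant $\pspan(u_1,-u_2)$ has arc $[\mathrm{ang}(-u_2),\mathrm{ang}(u_1)] \subseteq [\mathrm{ang}(-v_2),\mathrm{ang}(v_1)]$, since $\mathrm{ang}(u_1)\le \mathrm{ang}(v_1)$ and $\mathrm{ang}(u_2)\ge \mathrm{ang}(v_2)$ (the $\D_2$-gap is $\D_1$-free); the halfplane case is identical. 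Finally, placing the apex of $C'$ at $a = x - \epsilon\xi$ for a small interior direction $\xi \in C'$ makes $x$ interior to $a + C'$, while $C' \subseteq \overline{C}$ and $a \in I_2$ give $a + C' \subseteq I_2$ (open cone plus its closure), so $a + C'$ is a $\D_1$-cone through $x$ missing $S$. Thus $x \notin \DCH{\D_1}(S)$, establishing $U_S^{\D_2} \subseteq U_S^{\D_1}$.

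For the last item, monotonicity in the directions makes $\DCH{\D_i}(S)$ increasing, so the limit is $\bigcup_i \DCH{\D_i}(S)$. The upper bound is immediate: $\D_i \subseteq \D_\Re$ gives $\DCH{\D_i}(S) \subseteq \DCH{\D_\Re}(S) = \CH(S)$, the last equality being the defining description of the ordinary hull. For the lower bound I would argue by contradiction: if $x \in \inter\CH(S)$ but $x \notin \DCH{\D_i}(S)$ for every $i$, then after the same apex reduction there is, for each $i$, a wedge with apex $x$, half-angle $\pi/2 - \alpha(\D_i)/2$ and bisector $d_i$, missing the bounded set $S$. Every $s \in S$ then satisfies $\langle s - x, d_i\rangle \le \|s-x\|\sin(\alpha(\D_i)/2)$; passing to a convergent subsequence $d_i \to d$ and using $\alpha(\D_i) \to 0$ yields $\langle s - x, d\rangle \le 0$ for all $s \in S$, so $S$ lies in a closed halfplane with $x$ on its boundary, contradicting $x \in \inter\CH(S)$. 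The two genuinely non-routine points are the consecutiveness computation in the first item and this apex-reduction-plus-limit argument, which also underlies the third item.
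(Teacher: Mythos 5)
The paper states this corollary without any proof at all---it is offered as a direct consequence of the removal definition $\DCH{\D}(S)=\Re^2\setminus\bigcup\brc{I\in\T(\D) \sep{I\cap S=\emptyset}}$---so there is no argument of the author's to compare against, and your write-up supplies exactly the content the paper leaves implicit. Judged on its own it is correct. The second bullet is the one-line monotonicity the author surely intended. For the first bullet, your recession-cone lemma is the right crux: a line with direction in $\D$ meets any $Q\in\QD(\D)$ in $\emptyset$, a ray, or the whole line, never a bounded segment, precisely because the directions excluded from the closed wedge and its antipode form the open gap of the $\dpair$ together with its antipodal arc, both $\D$-free (the antipodal arc using the symmetry $v\in\D\Rightarrow -v\in\D$); the conclusion that $\DCH{\D}(A)\cap\ell$ is an intersection of closed rays, hence an interval, is clean. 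For the third and fourth bullets, your apex reduction is sound ($x\in q+C$ gives $x+C\subseteq q+C$ since open convex cones satisfy $C+C\subseteq C$, and $c+\overline{C}\subseteq C$ for $c\in C$ justifies the $\epsilon$-retraction placing $x$ in the interior of the $\D_1$-cone), as are the arc-inclusion argument producing $C'\in\QD(\D_1)$ inside a given $S$-avoiding $\D_2$-cone and the compactness argument ($d_i\to d$, $\alpha(\D_i)\to 0$, forcing $S$ into a closed halfplane through $x$ and contradicting $x\in\inter{\CH(S)}$).

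One caveat deserves a sentence in your third bullet: the step ``take the $\D_1$-$\dpair$ $(u_1,u_2)$ whose gap contains $v_1$'' presupposes that every direction lies in $\D_1$ or in the gap of some $\dpair$ of $\D_1$. This holds when $\D_1$ is finite or closed, but fails for dense non-closed sets such as $\D_Q$, for which $\DPAIRS{\D_1}=\emptyset$ and $\QD(\D_1)$ degenerates to halfplanes only; in that regime the bullet as literally stated is even false for unbounded $S$ (take $S$ a full line of irrational slope: then $\DCH{\D_Q}(S)=\Re^2$ while $\DCH{\D_\Re}(S)=S$). This is the same $\D_Q$ pathology the paper itself points out for $\CHX_{\D_Q}$, so your implicit hypothesis matches the intended scope of the corollary, but since the statement quantifies over arbitrary $\D_1\subseteq\D_2$ you should state the assumption explicitly; note that the fourth bullet is unaffected, since its lower bound uses only the existence of \emph{some} avoiding cone of aperture at least $\pi-\alpha(\D_i)$, and its hypothesis that $S$ is bounded rescues the upper bound by a perturbation of the separating direction even when $\DPAIRS{\D_i}$ is empty.
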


\begin{lemma}
    Let $\D$ a set of directions, and let $S$ be a finite
    set of points in the plane. Then $C = \DCH{\D}(S)$ is a
    polygonal set whose complexity is $O(|S \cap \bd{C}|)$.

    \label{lemma:on:boundary}
\end{lemma}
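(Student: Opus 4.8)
The plan is to work with the complement $U=\Re^2\setminus C$, which by the definition of $\DCH{\D}$ is exactly the union of all translates $I\in\T(\D)$ with $I\cap S=\emptyset$; each such $I$ is an open half-plane $\HL{v}+p$ or $\HR{v}+p$ with $v\in\D$, or an open wedge $(v_1,v_2)_L+p$ or $(v_1,v_2)_R+p$ for a $\dpair$ $(v_1,v_2)$. Thus $U$ is open, $C$ is closed, and $\bd{C}=\bd{U}=\overline U\setminus U$. The first step is to observe that every maximal straight piece (edge) of $\bd{C}$ lies either on the bounding line of a removed half-plane or on a bounding ray of a removed wedge; in both cases its supporting direction is a vector of $\D$. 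Hence $C$ is polygonal with all edges parallel to directions of $\D$, and a vertex of $\bd{C}$ can occur only where two such supporting lines/rays cross. Finiteness (so that ``polygonal'' holds literally) I would get from the fact that $S$ is finite: for each fixed $Q\in\QD(\D)$ the translates of $Q$ avoiding $S$ have an innermost envelope pinned by points of $S$, and only finitely many combinatorially distinct such envelopes can appear on $\bd{C}$.

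The core is a charging argument that classifies the vertices of $\bd{C}$ by the interior angle of $C$. The plan is to prove two facts. First, every \emph{convex} vertex $w$ (interior angle $<\pi$) is a point of $S\cap\bd{C}$: if $w\notin S$ then, since $S$ is finite, a small disk about $w$ misses $S$, and the outward sector at $w$ (of angle $>\pi$) contains a translate with apex at $w$ of one of the wedges/half-planes of $\QD(\D)$ realizing the two incident edges; fattening this translate slightly past $w$ (legal precisely because $w\notin S$) exhibits a removed region containing $w$, so $w\in U$, a contradiction. Second, every \emph{reflex} vertex $w$ (interior angle $>\pi$) is the apex of a maximal removed wedge $I=Q+w$ with $Q\in\QD(\D)$, since the outward sector at $w$ has angle $<\pi$ and is exactly filled by such a $Q$. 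Maximality of $I$ (it cannot be enlarged or translated while avoiding $S$) forces $S$ to \emph{pin} it, i.e.\ to contain a point on each of its two bounding rays; each such pinning point $s$ lies in $\overline U\setminus U=\bd{C}$, so $s\in S\cap\bd{C}$.

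With these two facts in hand I would charge each reflex vertex to the (at most two) points of $S\cap\bd{C}$ that pin its wedge, and each convex vertex to itself. It then remains to bound the multiplicity: that a fixed $s\in S\cap\bd{C}$ pins only $O(1)$ of the maximal removed wedges. I would argue this locally from the $\D$-convexity of $C$ (guaranteed by the corollary above, that $\DCH{\D}(A)$ is $\D$-convex): traversing $\bd{C}$, consecutive points of $S\cap\bd{C}$ are separated by only a bounded number of wedge-apices, so convex and reflex vertices interleave up to an $O(1)$ factor. Hence the number of vertices is $\#\{\text{convex}\}+\#\{\text{reflex}\}=O\pth{\abs{S\cap\bd{C}}}$, and since on each component the number of edges equals the number of vertices (plus $O(1)$ per unbounded component contributing rays), the total complexity is $O\pth{\abs{S\cap\bd{C}}}$.

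I expect the main obstacle to be exactly this last multiplicity bound --- ruling out long runs of reflex vertices between two consecutive boundary points of $S$ --- together with the bookkeeping needed to make the argument uniform across possibly disconnected and unbounded $C$ (as in Figure~\ref{fig:dch:example:ext}) and across infinite direction sets $\D$ such as $\D_\Re$, where a continuum of removed half-planes must nonetheless yield only finitely many edges.
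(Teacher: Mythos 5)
There is a genuine gap, and it sits exactly where you predicted: the multiplicity bound. Your reflex vertices are charged to pinning points that can lie far away along the rays of the removed wedge, and your proposed justification --- that along $\bd{C}$ consecutive points of $S\cap\bd{C}$ are separated by $O(1)$ wedge apices --- is precisely the assertion of the lemma, not an argument for it; nothing in the proposal excludes a long concave run of reflex vertices all pinned by the same pair of points of $S$. The paper closes this hole with one idea your plan is missing: \emph{every edge $e$ of $C$ contains a point of $S$ on it}. Take a removed set $I\in\T(\D)$ with $e\subseteq\bd{I}$ and $I\cap S=\emptyset$; since $e$ is an edge, no translate of $I$ strictly contains $e$ while avoiding $S$, and (sliding $I$ parallel to its other bounding ray, so that points of $S$ sitting on that ray are not swallowed) this forces a blocking point $p\in S$ on $\bd{I}\cap l_e$, where $l_e$ is the line through $e$. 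Now $\D$-convexity of $C$ enters in a way you never use: the direction of $e$ belongs to $\D$, so $l_e\cap C$ is connected, hence equal to $e$, and therefore $p\in e\subseteq\bd{C}$. Charging \emph{edges} (rather than vertices) to these on-edge points makes the multiplicity a purely local angular count at $p$: at most two of the supporting sets incident to $p$ can have angle below $\pi/2$ (such a set corresponds to a $\dpair$ with gap exceeding $\pi/2$, and there are at most two such gaps), giving at most $4$ charges per point. Note also that your Fact 2 needs this same directional sliding: a generic small translation covering the apex sweeps across \emph{both} rays simultaneously, so bare maximality yields only one blocking point somewhere on $\bd{I}$, not one on each ray as you claim.

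Your Fact 1 is moreover false as stated, and its justification (``legal precisely because $w\notin S$'') is wrong: the fattened translate can swallow points of $S$ lying elsewhere on its bounding lines. Concretely, take $\D=\D_{xy}$ and $S=\brc{(0,0),\,(5,1),\,(1,5)}$. Then $C=\DCH{\D_{xy}}(S)$ is the isolated point $(0,0)$ together with the one-dimensional ``L'' $\pth{[1,5]\times\brc{1}}\cup\pth{\brc{1}\times[1,5]}$, and its right-angle corner $(1,1)$ is a vertex of $C$ that is not in $S$: the open quadrant with apex $(1,1)$ avoids $S$, but every fattening $\brc{x>1-\epsilon}\cap\brc{y>1-\epsilon'}$ swallows $(5,1)$ and $(1,5)$. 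This example also shows that your convex/reflex dichotomy, which presupposes a nondegenerate vertex with exactly two incident edges and a well-defined interior angle, does not cover the lower-dimensional features $\DCH{\D}(S)$ can have (spurs, pinch points, single-point components). The paper's edge-charging is indifferent to these degeneracies: each of the two edges of the ``L'' contains its far endpoint, a point of $S$, and single-point components are points of $S$ by definition, so the $O(\abs{S\cap\bd{C}})$ count goes through uniformly.
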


\begin{proof}
    It is easy to show that $C$ is polygonal.  We charge
    each vertex of $C$ to some point of $S' = S \cap\bd{C}$.
    Let $C'$ be a connected component of $C$. If $C'$ is a
    single point, then this is a point of $S'$. Otherwise,
    let $e$ be an edge of $C'$, and let $I$ be a set in
    $\T(\D)$ such that $e \subseteq \bd{I}$, and $I \cap S =
    \emptyset$.
    
    Since $e$ is an edge of $C'$, there is no $q \in \Re^2$
    such that $e \subseteq q + I$, and $(q+I)\cap S =
    \emptyset$.  This implies that there must be a point $p$
    of $S$ on $\bd{I} \cap l_e$, where $l_e$ is the line
    passing through $e$. However, $C$ is a $\D$-convex set,
    and the direction of $e$ belongs to $\D$.  It follows
    that $l_e$ intersects $C$ along a connected set (i.e.,
    the segment $e$), and $p \in l_e \cap C = e$. We charge
    the edge $e$ to $p$.  We claim that a point $p$ of $S'$
    can be charged at most 4 times.  Indeed, for each edge
    $e'$ of $C$ incident to $p$, there is a supporting set
    in $\T(\D)$, such that $p$ and $e'$ lie on its boundary.
    Only two of those sets can have angle less than $\pi/2$
    at $p$ (because such a set corresponds to a
    $\dpair(v_1,v_2)$ with $\alpha(v_1,v_2) > \pi/2$). Thus,
    a point of $S'$ is charged at most $\max( 2\pi/(\pi/2),
    \pi/(\pi/2) + 2) = 4$ times.
\end{proof}



\begin{lemma}
    Let $\D$ be a set of directions, and let $K$ be a
    bounded convex body in the plane, such that the expected
    area of $\DCH{\D}(N)$ of a set $N$ of $n$ points, chosen
    uniformly and independently from $K$, is at least
    $\pth{1-f(n)}Area(K)$, where $1 \geq f(n) \geq 0$, for
    $n \geq 1$.  Then, the expected number of vertices of $C
    = \DCH{\D}(N)$ is $O(n f(n/2))$.

    \label{lemma:area:to:vertices:ext}
\end{lemma}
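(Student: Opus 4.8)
The plan is to mimic the proof of Lemma~\ref{lemma:area:to:vertices} (Efron's Theorem), replacing the ordinary convex hull $\CHO$ with the generalized hull operator $\DCH{\D}$, and adapting each step to account for the fact that a point of $S'$ may be charged up to four times rather than once. First I would split the random sample $N$ into two halves $N_1,N_2$ of $n/2$ points each, chosen uniformly and independently from $K$. By Lemma~\ref{lemma:on:boundary} the number of vertices of $C=\DCH{\D}(N)$ is $O(|N \cap \bd{C}|)$, so it suffices to bound the expected number of sample points lying on the boundary of $C$. I would let $V_1$ (resp.\ $V_2$) denote the number of points of $N_1$ (resp.\ $N_2$) on $\bd{\DCH{\D}(N_1\cup N_2)}$, and aim to bound $E[V_1]+E[V_2]$.

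The key monotonicity input is the second bullet of the Corollary: since $N_2 \subseteq N_1\cup N_2$, we have $\DCH{\D}(N_2)\subseteq \DCH{\D}(N)$. Hence any point of $N_1$ that lies on $\bd{\DCH{\D}(N)}$ must lie outside the interior of $\DCH{\D}(N_2)$; that is, $V_1$ is bounded by the number of points of $N_1$ falling in $K\setminus \inter{\DCH{\D}(N_2)}$. Conditioning on $N_2$, each of the $n/2$ points of $N_1$ lies in this region with probability $(Area(K)-Area(\DCH{\D}(N_2)))/Area(K)$, so
\[
E\pbrc{V_1 \sep{N_2}} \leq \frac{n}{2}\pth{\frac{Area(K)-Area(\DCH{\D}(N_2))}{Area(K)}}.
\]
Taking expectations over $N_2$ and using the hypothesis on the expected area of $\DCH{\D}$ of an $(n/2)$-point sample, together with $E[X]=E_Y[E[X\mid Y]]$, gives $E[V_1]\leq \frac{n}{2}f(n/2)$, and symmetrically for $E[V_2]$. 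Thus $E[V_1]+E[V_2]\leq n f(n/2)$, and every point of $N\cap\bd{C}$ is counted in $V_1+V_2$, so $E[|N\cap\bd{C}|]\leq n f(n/2)$. Applying Lemma~\ref{lemma:on:boundary} then yields $E[\text{vertices of }C]=O(E[|N\cap\bd{C}|])=O(n f(n/2))$, as claimed.

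The main subtlety, compared with the ordinary convex hull case, is that $\DCH{\D}(N_2)$ need not be connected (see Figure~\ref{fig:dch:example:ext}) and the event that its boundary properly contains $N$ has no clean ``semidisk'' interpretation; but this does not actually matter here, because Lemma~\ref{lemma:on:boundary} already absorbs the constant-factor overcounting (the factor of at most~$4$) into the $O(\cdot)$, and the area-to-count argument above only uses the set-monotonicity of $\DCH{\D}$, not its connectivity. The one point I would verify carefully is that a point of $N_1$ on $\bd{C}$ genuinely lies in $K\setminus\inter{\DCH{\D}(N_2)}$: this is exactly the monotonicity $\DCH{\D}(N_2)\subseteq \DCH{\D}(N)$ applied to its contrapositive, since a point in the interior of $\DCH{\D}(N_2)$ would lie in the interior of $\DCH{\D}(N)$ and hence not on $\bd{C}$. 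With that established, the final $O(\cdot)$ follows immediately from Lemma~\ref{lemma:on:boundary}.
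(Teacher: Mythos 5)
Your proposal is correct and follows exactly the route the paper intends: its own proof is just a two-line sketch saying that by Lemma~\ref{lemma:on:boundary} the complexity of $C$ is proportional to $|N \cap \bd{C}|$, and that the Efron-style argument of Lemma~\ref{lemma:area:to:vertices} then extends, which is precisely the splitting-into-halves, monotonicity, and conditioning argument you carry out in full. Your explicit verification that a point of $N_1$ on $\bd{C}$ lies outside $\inter{\DCH{\D}(N_2)}$ (via $\DCH{\D}(N_2)\subseteq \DCH{\D}(N)$) is exactly the detail the paper leaves implicit, so nothing is missing.
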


\begin{proof}
    By Lemma \ref{lemma:on:boundary}, the complexity of $C$
    is proportional to the number of points of $N$ on the
    boundary of $C$. Using this observation, it is easy to
    verify that the proof of Lemma
    \ref{lemma:area:to:vertices} can be extended to this
    case.
\end{proof}

We would like to apply the proof of Theorem 2.3 to bound the
expected complexity of a random $\D$-convex hull inside a
disk. Unfortunately, if we try to concentrate only on three
consecutive sectors (as in Figure \ref{fig:slice}) it might
be that there is a quadrant $I$ of $\T(\D)$ that intersects
the middle the middle sector from the side (i.e. through the
two adjacent sectors). This, of course, can not happen when
working with the regular convexity. Thus, we first would
like to decompose the unit disk into ``safe'' regions, where
we can apply a similar analysis as the regular case, and the
``unsafe'' areas. To do so, we will first show that, with
high probability, the $\DCH{\D}$ of a random point set
inside a disk, contains a ``large'' disk in its interior.
Next, we argue that this implies that the random $\DCH{\D}$
covers almost the whole disk, and the desired bound will
readily follows from the above Lemma.

\begin{definition}
    For $r \geq 0$, let $B_r$ denote the disk of
    radius of $r$ centered at the origin.
\end{definition}

\begin{lemma}
    Let $\D$ be a set of directions, such that $0 \leq
    \alpha(\D) \leq \pi / 2$. Let $N$ be a set of $n$ points
    chosen uniformly and independently from the unit disk.
    Then, with probability $1-n^{-10}$ the set $\DCH{\D}(N)$
    contains $B_r$ in its interior, where $r = 1 - c
    \sqrt{\log{n}/n}$, for an appropriate constant $c$.

    \label{lemma:big:disk}
\end{lemma}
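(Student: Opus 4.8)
The plan is to reduce the statement to a clean event about the random sample that no longer mentions $\D$, and then bound it by a covering argument. First I would translate ``$B_r$ lies in $\inter \DCH{\D}(N)$'' into a statement about empty regions: by definition $x \notin \DCH{\D}(N)$ exactly when some $I \in \T(\D)$ with $I \cap N = \emptyset$ contains $x$. I claim the hypothesis $\alpha(\D) \le \pi/2$ lets me replace every such $I$ by a quarter--plane (a wedge of opening angle $\pi/2$) with apex exactly at $x$. Indeed, for a $\dpair$ $(v_1,v_2)$ the wedges $(v_1,v_2)_L$ and $(v_1,v_2)_R$ have opening angle $\pi - \alpha(v_1,v_2) \ge \pi - \pi/2 = \pi/2$, while half--planes have angle $\pi$, so every $I \in \T(\D)$ is a translate $a + W_0$ of a convex cone $W_0$ of opening angle $\ge \pi/2$. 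If $x$ lies in such an empty $I$, then $x - a \in W_0$, and since $W_0$ is a convex cone the apex--at--$x$ copy $x + W_0 \subseteq a + W_0 = I$ is also empty; shrinking its angle to $\pi/2$ gives an empty quarter--plane $Q$ with apex $x$. Hence it suffices to prove that, with probability $1 - n^{-10}$, no $x \in B_r$ is the apex of a quarter--plane disjoint from $N$. This step removes all dependence on the (possibly infinite) $\D$ and is the conceptual crux.

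Next I would record the geometry that fixes $r$. For $x$ at distance $d = \abs{x} \le r$ and any wedge $W$ of opening angle at least $\pi/4$ with apex $x$, elementary trigonometry gives $\Vol(W \cap B_1) \ge c_1 (1-d)^2$ for an absolute constant $c_1$; the minimum over opening directions is attained by the outward--opening wedge, whose bounding rays leave $B_1$ after travelling $\Theta(1-d)$, whereas a half--plane through $x$ cuts off a cap of the larger area $\Theta((1-d)^{3/2})$. Thus the wedges, not the half--planes, are the binding case, and the factor $(1-d)^2$ is exactly what makes $r = 1 - c\sqrt{\log n / n}$ the right threshold: for such $x$ one has $\Vol(W \cap B_1) \ge c_1(1-r)^2 = c_1 c^2 \log n / n$.

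I would then bound the probability that a fixed $x \in B_r$ is exposed. Since the opening direction of $Q$ is a priori arbitrary, I cover all directions by a fixed net of $M = O(1)$ directions spaced $\pi/8$ apart; if some quarter--plane $Q$ with apex $x$ is empty, then the $\pi/4$--wedge $R_i$ with apex $x$ aligned to the nearest net direction satisfies $R_i \subseteq Q$, is empty, and has $\Vol(R_i \cap B_1) \ge c_2 c^2 \log n / n$ by the previous paragraph. Each $R_i$ is empty with probability at most $\pth{1 - c_2 c^2 \log n/(\pi n)}^n \le n^{-c_2 c^2/\pi}$, so a union bound over the net gives $P[x \text{ exposed}] \le M\, n^{-c_2 c^2/\pi}$. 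Placing a grid of spacing $n^{-3}$ over $B_1$ ($O(n^6)$ points) and taking a union bound bounds the probability that any grid point of $B_r$ is exposed by $O(n^{6})\, M\, n^{-c_2 c^2/\pi} \le n^{-10}$, once $c$ is chosen so that $c_2 c^2/\pi \ge 17$; this is the ``appropriate constant $c$'' of the statement.

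Finally I would pass from grid points to all of $B_r$ and to the interior by a fattening argument, which I expect to be the fussiest step. If some $x \in B_r$ were exposed via an empty quarter--plane $Q$ with apex $x$, then since the cone of $Q$ fills a full quadrant of directions and the grid spacing $n^{-3}$ is far below $1-d = \Omega(\sqrt{\log n / n})$, there is a grid point $g \in Q$ with $g - x$ inside the cone of $Q$; the translate $g + \mathrm{cone}(Q) \subseteq Q$ is then empty, so $g$ is an exposed grid point just outside $B_r$, contradicting the high--probability event above (run the union bound on radius $r + n^{-3}$, absorbing $n^{-3}$ into $c$). Hence with probability $1 - n^{-10}$ no point of $B_r$ is exposed, i.e.\ $B_r \subseteq \DCH{\D}(N)$; repeating the estimate for the marginally larger radius $r + n^{-5}$ (again absorbed into $c$) upgrades this to $B_r \subseteq \inter \DCH{\D}(N)$. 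The principal obstacles are the $\D$--free reduction of the first paragraph and this concluding discretization/fattening, whereas the area computation is routine.
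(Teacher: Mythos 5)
Your argument is sound, and its per-point core coincides with the paper's: both reduce ``$x$ lies outside the hull'' to an empty wedge of opening angle at least $\pi/2$ with apex at $x$ (your convex-cone step $x + W_0 \subseteq a + W_0$ makes explicit a translation the paper uses only implicitly when it asserts the existence of $Q \in \QD(\D)$ with $(q+Q) \cap N = \emptyset$), and both then trap area $\Theta\pth{c^2 \log{n}/n}$ inside any such wedge via a constant net of $\pi/4$-wedges --- the paper's eight sectors $R_1, \ldots, R_8$ of the disk $q + B_{r''} \subseteq B_1$ are exactly your observation that a wedge apexed at distance $d$ from the center contains a full sector of radius $1-d$. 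Where you genuinely diverge is the passage from one fixed point to all of $B_r$: you lay down a grid of spacing $n^{-3}$, show that an exposed continuum point forces a nearby exposed grid point by sliding the cone to a grid point it contains, and absorb the $O(n^{-3})$ radius loss into $c$; the paper instead places $n^{10}$ points $N'$ uniformly on the circle $\bd{B_{r'}}$, shows by a union bound that with high probability $N' \subseteq \DCH{\D}(N)$, and then concludes \emph{deterministically} that $B_r \subseteq \DCH{\D}(N') \subseteq \DCH{\D}(N)$ for $r = 1 - 2c\sqrt{(\log{n})/n}$, rerunning the wedge argument against the dense circle points and invoking monotonicity and idempotence of the hull operator. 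Your grid-plus-fattening route is more pedestrian but more self-contained --- it never needs $\DCH{\D}$ to be monotone or $\D$-convex, only the cone characterization of exposure --- and you are more careful than the paper on two small points: the apex-translation justification, and the upgrade from containment to containment in the \emph{interior} via your $n^{-5}$ margin, which the paper's proof glosses over. What the paper's two-stage trick buys is brevity: a single union bound over boundary points replaces your grid, your cone-translation lemma, and the attendant radius bookkeeping.
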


\begin{proof}
    Let $r'=1 - c \sqrt{(\log{n})/n}$, where $c$ is a
    constant to be specified shortly. Let $q$ be any point
    of $B_{r'}$. We bound the probability that $q$ lies
    outside $C = \DCH{\D}(N)$ as follows: Draw $8$ rays
    around $q$, such that the angle between any two
    consecutive rays is $\pi/4$. This partitions $q +
    B_{r''}$, where $r'' = c \sqrt{(\log{n})/n}$, into eight
    portions $R_1, \ldots, R_8$, each having area $\pi c^2
    \log{n}/(8n)$. Moreover, $R_i \subseteq q + B_{r''}
    \subseteq B_1$, for $i=1, \ldots, 8$. The probability of
    a point of $N$ to lie outside $R_i$ is $1 - c^2
    \log{n}/(8n)$. Thus, the probability that all the points
    of $N$ lie outside $R_i$ is
    \[
    P \pbrc{ N \cap R_i = \emptyset } \leq \pth{ 1 - \frac{
          c^2 \log{n}}{8n}}^n \leq e^{-(c^2\log{n})/8} =
    n^{-c^2/8},
    \]
    since $1-x \leq e^{-x}$, for $x \geq 0$. Thus, the
    probability that one of the $R_i$'s does not contain a
    point of $N$ is bounded by $8 n^{-c^2/8}$. We claim that
    if $R_i \cap N \ne \emptyset$, for every $i=1,\ldots,
    8$, then $q \in C$.  Indeed, if $q \notin C$ then there
    exists a set $Q \in \QD(\D)$, such that $(q + Q) \cap N
    = \emptyset$. Since $\alpha(\D) \leq \pi/2$ there exists
    an $i$, $1 \leq i \leq 8$, such that $R_i \subseteq q +
    Q$; see Figure \ref{fig:quadrants}. This is a
    contradiction, since $R_i \cap N \ne \emptyset$.  Thus,
    the probability that $q$ lies outside $C$ is $\leq
    8n^{-c^2/8}$.

    \begin{figure}
        \centerline{
           \def\IPEfile{figs/quadrants.ipe}\input{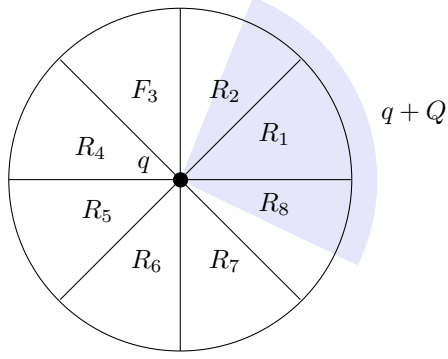}
           }
        \caption{Since $\alpha(\D) \leq \pi/2$, any quadrant 
           $Q \in \QD(\D)$, when translated by $q$, must
           contain one of the $R_i$'s.}
        
        \label{fig:quadrants}
    \end{figure}

    Let $N'$ denote a set of $n^{10}$ points spread
    uniformly on the boundary of $B_{r'}$. By the above
    analysis, all the points of $N'$ lie inside $C$ with
    probability at least $1-8n^{10-c^2/8}$.  Furthermore,
    arguing as above, we conclude that $B_{r} \subseteq
    \DCH{\D}(N')$, where $r = 1 - 2 c \sqrt{(\log{n})/n}$.
    Hence, with probability at least $1 - 8n^{10-c^2/8}$,
    $\DCH{\D}(C)$ contains $B_r$.  The lemma now follows by
    setting $c=20$, say.
\end{proof}

Since the set of directions may contain large gaps, there
are points in $B_1 \setminus B_r$ that are ``unsafe'', in
the following sense:

\begin{definition}
    Let $\D$ be a set of directions, and let $0 \leq r \leq
    1$ be a prescribed constant, such that $0 \leq
    \alpha(\D) \leq \pi / 2$. A point $p$ in $B_1$ is {\em
       safe}, relative to $B_r$, if $op \subseteq
    \DCH{\D}(B_r \cup \brc{p})$.
\end{definition}

See Figure \ref{fig:unsafe} for an example how the unsafe
area looks like. The behavior of the $\DCH{\D}$ inside the
unsafe areas is somewhat unpredictable. Fortunately, those
areas are relatively small.

\begin{figure}
    \centerline{
       \def\IPEfile{figs/unsafe1.ipe}\input{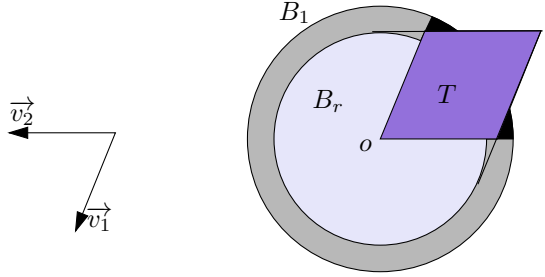}
       }
    \caption{The dark areas are the unsafe areas for a 
       consecutive pairs of directions
       $v_1, v_2 \in \D$.}
    
    \label{fig:unsafe}
\end{figure}

\begin{lemma}
    Let $\D$ be a set of directions, such that $0 \leq
    \alpha(\D) \leq \pi / 2$, and let $r= 1 - O \pth{
       \sqrt{(\log{n})/n}}$. The unsafe area in $B_1$,
    relative to $B_r$, can be covered by a union of $O(1)$ caps.
    Furthermore, the length of the base of such a cap is
    $O(((\log{n})/n)^{1/4})$, and its height is
    $O\pth{\sqrt{(\log{n})/n}}$.

    \label{lamma:bad:bad:bad:caps}
\end{lemma}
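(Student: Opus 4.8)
The plan is to show that a point $p$ can fail to be safe only because the straight segment $op$ pokes out of a concavity of $C=\DCH{\D}(B_r\cup\{p\})$ produced by some quadrant of $\QD(\D)$, and that such a concavity can arise only next to a gap of $\D$ whose angle is within $O(\sqrt{1-r})$ of $\pi/2$. Since $\alpha(\D)\le\pi/2$ and the consecutive gaps of $\D$ sum to $2\pi$, only $O(1)$ such near-maximal gaps exist, and each contributes a bounded number of thin slivers hugging $\bd B_1$. Throughout write $\delta=1-r=O\pth{\sqrt{(\log n)/n}}$.

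First I would reduce to quadrants and extract the height bound. If $p$ is unsafe then some point $x$ of $op$ lies outside $C$, so a translate $q+Q$ of an element $Q\in\QD(\D)$ contains $x$ yet misses $B_r\cup\{p\}$. If $Q$ is a half\-plane $\HL{v}$ or $\HR{v}$ this is impossible: since $q+Q$ misses $B_r\ni o$ we have $o\notin q+Q$, and as $x\in q+Q$ the bounding line crosses the ray from $o$ through $p$ strictly between $o$ and $x$; then $p$, lying beyond $x$ on that ray, also lies in $q+Q$, contradicting that $q+Q$ misses $p$. Hence only the quadrants $(v_1,v_2)_L,(v_1,v_2)_R$ coming from $\dpair$s can expose a point of $op$. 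Moreover an exposed $x\notin B_r$ satisfies $|x|\in[r,1]$, and the same reasoning gives $|p|\in(r,1]$, so both $p$ and every exposed point lie in the crescent $B_1\setminus B_r$, i.e. within distance $\delta$ of $\bd B_1$. This already yields the height bound $O(\delta)=O\pth{\sqrt{(\log n)/n}}$.

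The main step, and the part I expect to be delicate, is the angular threshold. Fix a $\dpair$ $(v_1,v_2)$ with $\alpha=\alpha(v_1,v_2)$ and consider a translate $q+W$ of $W=(v_1,v_2)_R=\pspan(v_1,-v_2)$, a wedge of opening $\pi-\alpha$, that misses $B_r$ while meeting the crescent. Because $q+W$ is the intersection of the two half\-planes bounded by its edge lines, one of those lines must separate $q+W$ from $B_r$; that line is parallel to one edge direction and has distance $\ge r$ from $o$. Taking the edge in direction $-v_2$, whose supporting line is parallel to $v_2$, this reads $\lvert q\cdot\hat n\rvert\ge r$ with $\hat n\perp v_2$ a unit vector. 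Since the apex must sit in $B_1$ to catch a crescent point we have $|q|\le1$, so $q\cdot\hat n\ge r=1-\delta$ pins the direction of $q$ to within $\arccos(1-\delta)=O(\sqrt\delta)$ of $\hat n$. Working in the frame where $v_1$ is horizontal, the requirement that $q+W$ also contain a point $x$ of $op$ on the $v_1$-side of the gap forces the apex into the closed upper half\-plane; combined with the direction constraint on $q$ this yields $\alpha\ge\pi/2-O(\sqrt\delta)$, and symmetrically for $(v_1,v_2)_L$. Thus a gap produces no unsafe points unless $\alpha(v_1,v_2)\ge\pi/2-O(\sqrt\delta)$. (The one subtlety to handle carefully is whether the foot of the perpendicular falls on the edge ray; in the complementary sub\-case one argues instead from the $v_1$-edge, and the roles of $v_1,v_2$ swap.)

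Finally I would assemble the cover. Only gaps with $\alpha(v_1,v_2)\ge\pi/2-O(\sqrt\delta)\ge\pi/3$ (for $n$ large) can expose a point, and since $\sum_{(v_1,v_2)\in\DPAIRS{\D}}\alpha(v_1,v_2)=2\pi$ there are at most $2\pi/(\pi/3)=O(1)$ of them. For each such heavy gap the same estimate shows that an exposed $x$, and hence the unsafe $p$ lying over it, has angular coordinate within $\arccos(1-\delta)=O(\sqrt\delta)$ of the edge direction $v_1$ (when the culprit is $(v_1,v_2)_R$) or $v_2$ (when it is $(v_1,v_2)_L$). Together with $|p|\in(r,1]$ this confines each such sliver to a region of angular width $O(\sqrt\delta)$ abutting $\bd B_1$, which is contained in a single cap of height $O(\delta)$ and base $O(\sqrt\delta)$. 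Collecting the two slivers of each of the $O(1)$ heavy gaps covers the whole unsafe area by $O(1)$ caps, each of height $O\pth{\sqrt{(\log n)/n}}$ and base $O\pth{((\log n)/n)^{1/4}}$, as claimed; everything outside the third paragraph is routine bookkeeping.
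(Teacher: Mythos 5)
Your architecture is sound, and in several places sharper than the paper's own proof: reducing unsafeness to a definition-level certificate $q+Q\in\T(\D)$ containing a point $x\in op$ while missing $B_r\cup\brc{p}$, the correct elimination of half-plane certificates, the height bound, the count of heavy gaps via $\sum\alpha(v_1,v_2)=2\pi$, and the final cap assembly are all fine; even your threshold $\alpha(v_1,v_2)\ge\pi/2-O(\sqrt\delta)$ is true (the paper only asserts $\ge\pi/4$). But the step you yourself flag as the crux is broken. It is \emph{false} that "one of those lines must separate $q+W$ from $B_r$'': two disjoint convex sets are separated by \emph{some} line, but not necessarily by an edge line of the wedge. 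Take the quadrant $\brc{x>0,\;y>0}$ and a disk centered at $(-a,-a)$ of radius $\rho\in(a,a\sqrt{2})$: the disk misses the wedge but crosses \emph{both} edge lines. Your parenthetical "subtlety'' (swap the roles of $v_1,v_2$) only covers one perpendicular foot missing its edge ray; when both feet miss — the apex is the wedge's closest point to $o$ — neither edge line separates, and this corner-toward-$o$ regime is not vacuous here: for $\alpha$ at or near $\pi/2$ (permitted, e.g.\ $\D_{xy}$), exposing wedges can sit with a nearly tangential edge whose line has distance only $r-O(\delta)$ from $o$, not $\ge r$. A second non sequitur in the same paragraph: "the apex must sit in $B_1$ to catch a crescent point'' is unjustified — a near-half-plane wedge with a far-away apex contains crescent points — so the bound $\abs{q}\le 1$ on which your $\arccos(1-\delta)$ direction-pinning rests is not established. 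Since unsafe points exist \emph{only} when $\alpha$ is within $O(\sqrt\delta)$ of $\pi/2$, your argument fails precisely in the operative regime.

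Both defects are repairable by one observation you never make: $op$ is a \emph{radial} segment, so its intersection with the open wedge is a chord whose endpoints $x_0,y$ lie on the two distinct edge rays and satisfy $r\le\abs{x_0}<\abs{y}\le 1$. In the triangle $q\,x_0\,y$ the apex angle is $\pi-\alpha$, the base angles sum to $\alpha\le\pi/2$, and the law of sines gives $\abs{q-x_0}\le\abs{x_0-y}\le\delta$; hence the apex lies within $O(\delta)$ of the crescent after all, and the tangent-cone estimate (directions from a point at distance $1-O(\delta)$ meeting $B_r$ fill an aperture $\pi-O(\sqrt\delta)$) then legitimately yields both the threshold and the angular localization, whether or not an edge line separates. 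Note the paper sidesteps wedge translates entirely: it argues from the unsafe point $p$ itself, letting $(v_1,v_2)$ be the gap straddling $\overrightarrow{po}$ and observing that if both $\ray(p,\overrightarrow{v_1})$ and $\ray(p,\overrightarrow{v_2})$ meet $B_r$ then $op\subseteq\CH\pth{\brc{p,o,p_1,p_2}}\subseteq\DCH{\D}(B_r\cup\brc{p})$; so a straddling ray must miss $B_r$, the near-$\pi$ tangent aperture at $p$ forces a wide gap, and per heavy gap the unsafe set $(B_1\cap Q)\setminus T$ is covered by two caps with base on $\bd{B_r}$. Your route is genuinely different and potentially sharper, but as written paragraph three has a real hole that the radial-chord triangle (or the paper's ray argument) is needed to close.
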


\begin{proof}
    Let $p$ be an unsafe point of $B_1$. Let
    $\overrightarrow{v_1},\overrightarrow{v_2}$ be the
    consecutive pair of vectors in $\D$, such that the
    vector $\overrightarrow{po}$ lies between them. If
    $\ray(p,\overrightarrow{v_1}) \cap B_r \ne \emptyset$,
    and $\ray(p,\overrightarrow{v_2}) \cap B_r \ne
    \emptyset$ then $po \subseteq \CH \pth{ \brc{ p, o, p_1,
          p_2 } } \subseteq \DCH{\D}(B_r \cup \brc{p})$, for
    any pair of points $p_1 \in B_r \cap
    \ray(p,\overrightarrow{v_1}), p_2 \in B_r \cap
    \ray(p,\overrightarrow{v_2})$.  Thus, $p$ is unsafe only
    if one of those two rays miss $B_r$. Since $p$ is close
    to $B_r$, the angle between the two tangents to $B_r$
    emanating from $p$ is close to $\pi$.  This implies that
    the angle between $\overrightarrow{v_1}$ and
    $\overrightarrow{v_2}$ is at least $\pi/4$ (provided $n$
    is a at least some sufficiently large constant), and the
    number of such pairs is at most $8$.
    
    The area in the plane that sees $o$ in a direction
    between $\overrightarrow{v_1}$ and
    $\overrightarrow{v_2}$, is a quadrant $Q$ of the plane.
    The area in $Q$ which is is safe, is a parallelogram
    $T$.  Thus, the unsafe area in $B_1$ that induced by the
    pair $\overrightarrow{v_1}$ and $\overrightarrow{v_2}$
    is $(B_1 \cap Q) \setminus T$. Since $\alpha(\D) \leq
    \pi/2$, this set can covered with two caps of $B_1$ with
    their base lying on the boundary of $B_r$. See Figure
    \ref{fig:unsafe}.
    
    The height of such a cap is $1-r =
    O\pth{\sqrt{\frac{\log{n}}{n(\pi - \alpha)}}}$, and the
    length of the base of such a cap is $2\sqrt{ 1- r^2 } =
    O\pth{ \pth{\frac{\log{n}}{n(\pi-\alpha)}}^{1/4} }$.
\end{proof}

The proof of Lemma \ref{lamma:bad:bad:bad:caps} is where our
assumption that $\alpha(\D) \leq \pi/2$ plays a critical
role. Indeed, if $\alpha(\D) > \pi/2$, then the unsafe areas
in $B_1 \setminus B_r$ becomes much larger, as indicated by
the proof.

\begin{theorem}
    Let $\D$ be a set of directions, such that $0 \leq
    \alpha(\D) \leq \pi / 2$.  The expected number of
    vertices of $\DCH{\D}(N)$, where $N$ is a set of $n$
    points, chosen uniformly and independently from the unit
    disk, is $O\pth{n^{1/3} + \sqrt{n\alpha(\D)}}$.
    
    \label{theorem:area:x}
\end{theorem}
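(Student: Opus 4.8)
The plan is to bound the expected area deficit of $C = \DCH{\D}(N)$ and feed it into Efron's inequality. By Lemma~\ref{lemma:area:to:vertices:ext} it suffices to prove that the expected area of $C$ is at least $\pth{1 - f(n)}\pi$ with $f(n) = O\pth{n^{-2/3} + \sqrt{\alpha(\D)/n}}$; then the expected number of vertices is $O\pth{n f(n/2)} = O\pth{n^{1/3} + \sqrt{n\,\alpha(\D)}}$. First I would condition on the event of Lemma~\ref{lemma:big:disk}: with probability $1 - n^{-10}$ we have $B_r \subseteq C$ for $r = 1 - c\sqrt{(\log n)/n}$, and the complementary event contributes only $O(n^{-10})$ to the expected deficit. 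On the good event, every empty translate $I$ of a set of $\QD(\D)$ must avoid $B_r$ (otherwise $I$ would carve a hole out of $B_r \subseteq C$), so the whole deficit $B_1 \setminus C$ lies inside the thin annulus $B_1 \setminus B_r$.

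Next I would split the annulus into its unsafe and safe parts. By Lemma~\ref{lamma:bad:bad:bad:caps} the unsafe part is covered by $O(1)$ caps, each of base $O\pth{((\log n)/n)^{1/4}}$ and height $O\pth{\sqrt{(\log n)/n}}$, hence of area $O\pth{((\log n)/n)^{3/4}} = o(n^{-2/3})$; this is absorbed into $f(n)$. On the safe part the key structural gain is that no empty quadrant can enter a sector ``from the side'' through its neighbours, so the three-consecutive-sector exposure argument of Theorem~\ref{theorem:area} remains valid, and I reuse the same $m = n^{1/3}$ sectors and the same radial tiling into tiles of area $\pi/n$.

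The difference from the convex case is the depth to which an empty exposing region reaches. A half-plane still contributes the convex part, yielding $O(m) = O(n^{1/3})$ exposed tiles exactly as before. A genuine quadrant of $\QD(\D)$ has opening $\pi - \gamma$ for some $\dpair$ of gap $\gamma \le \alpha(\D)$, and its inward corner lets it penetrate radially deeper than a half-plane. The estimate I would establish is that, to expose a point at depth $t$ below the rim (with $t$ small compared with $\gamma^2$), such a quadrant must leave empty a region of area $\Theta(t^2/\gamma)$; hence the typical penetration for a gap-$\gamma$ arc, the depth at which $n\cdot\Theta(t^2/\gamma)$ drops to $O(1)$, is $O\pth{\sqrt{\gamma/n}}$. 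That arc spans angular width $\gamma$, i.e. $\Theta(\gamma m)$ sectors, and reaches $O\pth{\sqrt{\gamma/n}\cdot n^{2/3}}$ tiles deep, so it accounts for $O\pth{\gamma^{3/2}\sqrt{n}}$ exposed tiles. Summing over all $\dpair$s, for which $\sum \gamma_i = 2\pi$ and each $\gamma_i \le \alpha(\D)$, and using $\gamma_i^{3/2} \le \sqrt{\alpha(\D)}\,\gamma_i$, bounds the extra exposed tiles by $O\pth{\sqrt{\alpha(\D)}\,\sqrt{n}} = O\pth{\sqrt{n\,\alpha(\D)}}$.

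Altogether the expected number of exposed tiles is $O\pth{n^{1/3} + \sqrt{n\,\alpha(\D)}}$; multiplying by the tile area $\pi/n$ and adding the negligible unsafe and failure contributions gives $f(n) = O\pth{n^{-2/3} + \sqrt{\alpha(\D)/n}}$, and Lemma~\ref{lemma:area:to:vertices:ext} finishes the proof. The step I expect to be the main obstacle is the penetration estimate of the third paragraph: proving the $\Theta(t^2/\gamma)$ lower bound on the empty area a quadrant needs in order to expose depth $t$, and then charging the resulting exposed tiles to empty quadrants (equivalently to boundary points) without double counting across the $\Theta(\gamma m)$ sectors an empty quadrant may straddle. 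It is precisely the safe property that legitimises the per-sector charging, and the telescoping identity $\sum \gamma_i = 2\pi$ that converts $\sum \gamma_i^{3/2}$ into the advertised $O\pth{\sqrt{\alpha(\D)}}$.
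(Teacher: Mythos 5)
Your proposal is correct in its overall architecture and reaches the right bound, but it diverges from the paper at the crux, so it is worth comparing. The scaffolding is shared: both proofs reduce to an area bound via Lemma~\ref{lemma:area:to:vertices:ext}, invoke Lemma~\ref{lemma:big:disk} to confine the deficit to the annulus $B_1 \setminus B_r$, discard the unsafe region via the $O(1)$ caps of Lemma~\ref{lamma:bad:bad:bad:caps} (your direct area bound $O\pth{((\log n)/n)^{3/4}} = o(n^{-2/3})$ on the caps is a slightly cleaner accounting than the paper's tile count, and gives the same magnitude), and count exposed tiles in a sector-by-annulus tiling. Where you part ways is in how the $\sqrt{n\alpha}$ term arises. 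The paper keeps the anchored-triangle argument of Theorem~\ref{theorem:area} and adds a purely deterministic geometric step: since the anchor points $p,q$ are safe, $op, oq \subseteq C$, so any empty quadrant meeting the middle sector has opening at least $\pi - \alpha$ and is trapped, by an inscribed-angle argument, in a circular cap $V$ below the chord $pq$ of depth $O(\alpha/m)$; this yields $O(\alpha\mu/m)$ extra tiles per sector uniformly, and the bound is then extracted by choosing $m = \max\pth{n^{1/3}, \sqrt{n\alpha}}$ --- the enlarged sector count is essential, not cosmetic. You instead freeze $m = n^{1/3}$ and replace the cap argument by a probabilistic, per-gap penetration estimate: a wedge of gap $\gamma$ reaching depth $t \ll \gamma^2$ must empty a region of area $\Theta(t^2/\gamma)$ (this estimate is correct --- the region between the ray $y = \gamma x/2 - t$ and the parabola $y=-x^2/2$ is essentially a triangle of base $2t/\gamma$ and height $t$, and it matches the convex cap area $t^{3/2}$ at the crossover $t \sim \gamma^2$), giving typical depth $O\pth{\sqrt{\gamma/n}}$, hence $O\pth{\gamma^{3/2}\sqrt{n}}$ tiles per gap, and $\sum_i \gamma_i^{3/2} \leq \sqrt{\alpha}\sum_i \gamma_i = 2\pi\sqrt{\alpha}$ over all $\dpair$s. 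What your route buys is a genuinely finer, gap-sensitive bound $O\pth{n^{1/3} + \sqrt{n}\sum_i \gamma_i^{3/2}}$, which improves on the theorem when $\D$ has only a few large gaps; what the paper's route buys is that its key step is deterministic geometry with no charging scheme to police, at the price of a bound depending only on the worst gap. The step you flag as the obstacle is indeed where the remaining work lies, and it is rigorizable along the paper's own lines: define, for each sector in a gap's arc, a maximal-exposure-depth variable, and use the $\Theta(t^2/\gamma)$ area bound to get geometrically decaying tail probabilities at depth levels $t_k = \sqrt{k\gamma/n}$, so the expected per-sector depth telescopes to $O\pth{\sqrt{\gamma/n}}$ exactly as the $X_j$ argument does. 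One point that needs explicit care in a full write-up: deep exposure by a gap-$\gamma$ quadrant is confined not to an arc of width exactly $\gamma$ but to width $\gamma + O\pth{((\log n)/n)^{1/4}}$, since a bounding ray may dip slightly below the tangent while still missing $B_r$; if $\D$ has many small gaps this additive slack must be absorbed into the convex (half-plane) accounting --- in the slack region the wedge behaves like a half-plane --- or the sum over gaps could leak.
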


\begin{proof}
    We claim that the expected area $\DCH{\D}(N)$ is at
    least $\pi - O \pth{ n^{-2/3} + \sqrt{\alpha/n} }$,
    where $\alpha = \alpha(\D)$.  The theorem will then
    follow from Lemma \ref{lemma:area:to:vertices:ext}.
    
    Indeed, let $m$ be an integer to be specified later, and
    assume, without loss of generality, that $m$ divides
    $n$.  Partition $B$ into $m$ congruent sectors, $\SC_1,
    \ldots, \SC_m$.  Let $B^1, \ldots, B^{\mu}$ denote the
    $\mu = n/m$ disks centered at the origin, such that (i)
    $B^{1} = B_1$, and (ii) $Area(B^{i-1}) - Area(B^{i}) =
    \pi/\mu$, for $i=2, \ldots, \mu$.  Let $r_i$ denote the
    radius of $B^i$, for $i=1, \ldots, \mu$.  Note\footnote{
       $Area(B^1) - Area(B^2) = \pi(1-r_2^2) = \pi/\mu$,
       thus $r_2^2 = 1 - 1/\mu$. We have $r_2 \leq 1
       -1/(2\mu)$, and $r_1 - r_2 \geq 1 - (1-1/(2\mu)) =
       1/(2\mu)$.}, that $r_{i} - r_{i+1} \geq r_{i-1} -
    r_{i}\geq 1/(2\mu)$, for $i=2, \ldots, \mu-1$.
    
    Let $r= 1 - O\pth{\sqrt{(\log{n})/n}}$, and let $U$ be
    the set of sectors that either intersect an unsafe area
    of $B$ relative to $B_r$, or their neighboring sectors
    intersect the unsafe area of $B$. By Lemma
    \ref{lamma:bad:bad:bad:caps}, the number of sectors in
    $U$ is $O(1) \cdot O \pth{\frac{ ((\log{n})/n)^{1/4}}
       {(2\pi/m)}} = O(m((\log{n})/n)^{1/4})$.
    
    Let $S_{i,j} = (B^{i} \setminus B^{i+1}) \cap \SC_j$,
    and $S_{\mu,j}=B^{\mu} \cap \SC_j$, for $i=1, \ldots,
    \mu-1$, and $j=1, \ldots, m$. The set $S_{i,j}$ is
    called the $i$-th {\em tile} of the sector $\SC_j$, and
    its area is $\pi/n$, for $i=1,\ldots,\mu$, and $j=1,
    \ldots,m$.
    
    Let $X_j$ denote the first index $i$ such that $N \cap
    S_{i, j} \ne \emptyset$, for $j=1, \ldots, m$. The
    probability that $X_j = k$ is upper-bounded by the
    probability that the tiles $S_{1, j}, \ldots,
    S_{(k-1),j}$ do not contain any point of $N$; namely, by
    $\pth{1-\frac{k-1}{n}}^{n}$.  Thus, $P[X_j = k] \leq
    \pth{1-\frac{k-1}{n}}^{n} \leq e^{-(k-1)}$. Thus,
    \[
    E \pbrc{ X_j } = \sum_{k=1}^{\mu} k P[X_j = k ] \leq
    \sum_{k=1}^{\mu} k e^{-(k-1)} = O(1),
    \]
    for $j=1, \ldots, m$.
    
    Let $C$ denote the set $\DCH{\D}(N \cup B_r)$. The tile
    $S_{i,j}$ is {\em exposed} by a set $K$, if $S_{i,j}
    \setminus K \ne \emptyset$.
    
    We claim that the expected number of tiles exposed by
    $C$ in a section $S_j \notin U$ is at most $X_{j-1} +
    X_{j+1} + O(\mu/m^2 + \alpha\mu/m)$, for $j=1,\ldots, m$
    (where we put $X_0 = X_m$, $X_{m+1} = X_1$).
    
    
    
    Indeed, let $w=max(X_{j-1},X_{j+1})$, and let $p,q$ be
    the two points in $S_{j-1,w}, S_{j+1,w}$, respectively,
    such that the number of sets exposed by the triangle $T
    = \triangle{opq}$, in the sector $\SC_j$, is maximal.
    Both $p$ and $q$ lie on $\bd{B^{w+1}}$ and on the
    external radii bounding $\SC_{j-1}$ and $\SC_{j+1}$, as
    shown in Figure \ref{fig:slice}.  Let $s$ denote the
    segment connecting the midpoint $\rho$ of the base of
    $T$ to its closest point on $\bd{B^w}$. The number of
    tiles in $\SC_j$ exposed by $T$ is bounded by $w$, plus
    the number of tiles intersecting the segment $s$.  The
    length of $s$ is
    \[
    |oq| - |oq| \cos \pth{ \frac{3}{2} \cdot \frac{2\pi}{m}}
    \leq 1 - \cos \pth{ \frac{3\pi}{m}} \leq \frac{1}{2}
    \pth{ \frac{3\pi}{m}}^2 = \frac{4.5\pi^2}{m^2},
    \]
    since $\cos{x} \geq 1-x^2/2$, for $x \geq 0$.
    
    On the other hand, the segment $s$ intersects at most
    $\ceil{||s||/(1/(2\mu))} = O(\mu/m^2)$ tiles, and we
    have that the number of tiles exposed in the sector
    $\SC_i$ by $T$ is at most $w + O(\mu/m^2)$, for $j=1,
    \ldots, m$.
    
    Since $\SC_j \notin U$, the points $p,q$ are safe, and
    $op, oq \subseteq C$. This implies that the only
    additional tiles that might be exposed in $\SC_j$ by
    $C$, are exposed by the portion of the boundary of $C$
    between $p$ and $q$ that lie inside $T$.  Let $V$ be the
    circular cap consisting of the points in $T$ lying
    between $pq$ and a circular arc $\gamma \subseteq T$,
    connecting $p$ to $q$, such that for any point $p' \in
    \gamma$ one has $\angle{pp'q} = \pi - \alpha$. See
    Figure \ref{fig:circ:arc}.

    \begin{figure}
        \centerline{ 
           \def\IPEfile{figs/attack.ipe}\input{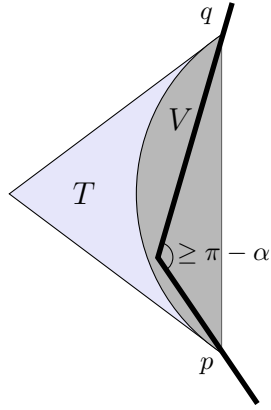} }
        \caption{The portion of $T$ that can be removed by a quadrant
           $Q$ of $\T(\D)$, is covered by the darkly-shaded
           circular cap, such that any point on its bounding
           arc creates an angle $\pi-\alpha$ with $p$ and
           $q$.}
        
        \label{fig:circ:arc}
    \end{figure}
    
    Let $Q \in \T(\D)$ be any quadrant of the plane induce
    by $\D$, such that $Q \cap N = \emptyset$ (i.e. $C\cap Q
    = \emptyset$), and $Q \cap T \ne \emptyset$.  Then, $Q
    \cap op = \emptyset, Q \cap oq =\emptyset$ since $p$ and
    $q$ are safe. Moreover, the angle of $Q$ is at least
    $\pi - \alpha$, which implies that $Q \cap T \subseteq
    V$.  See Figure \ref{fig:circ:arc}.
    
    Let $s'$ be the segment $o\rho \cap V$, where $\rho$ is
    as above, the midpoint of $pq$. The length of $s''$ is
    \[
    |s'| \leq \sin \pth{ \frac{3}{2} \cdot \frac{2\pi}{m}}
    \tan{ \frac{\alpha}{2} } \leq \frac{3\pi}{m}
    \frac{\sqrt{2}\alpha}{2} \leq \frac{3\pi\alpha}{m},
    \]
    since $\sin{x} \leq x$, for $x \geq 0$, and $1/\sqrt{2}
    \leq \cos{(\alpha/2)}$ (because $0 \leq \alpha \leq
    \pi/2$).
    
    Thus, the expected number of tiles exposed by $C$, in a
    sector $\SC_j \notin U$, is bounded by
    \[
    X_{j-1} + X_{j+1} + O \pth{ \frac{\mu}{m^2}} + O \pth{
       \frac{3\pi\alpha/m}{1/(2\mu)} } = X_{j-1} + X_{j+1} +
    O \pth{ \frac{\mu}{m^2}} + O \pth{ \frac{\alpha
          \mu}{m}}.
    \]
    
    Thus, the expected number of tiles exposed by $C$, in
    sectors that do not belong to $U$, is at most
    \[
    E \pbrc{ \sum_{j=1}^{m} \pth{ X_{j-1} + X_{j+1} + O
          \pth{ \frac{\mu}{m^2}} + O \pth{ \frac{\alpha
                \mu}{m}} } } = O \pth{ m + \frac{\mu}{m} +
       \alpha \mu}.
    \]
    
    Adding all the tiles that lie outside $B_r$ in the
    sectors that belong to $U$, it follows that the expected
    number of tiles exposed by $C$ is at most
    \begin{eqnarray*}
        O&&\hspace{-0.75cm} \pth{ m + \frac{\mu}{m} + \alpha
           \mu + |U|\cdot \frac{1-r}{1/2\mu} }= O \pth{ m +
           \frac{\mu}{m} + \alpha \mu + m
           \pth{\frac{\log{n}}{n}}^{1/4} \cdot
           \mu \sqrt{\pth{\frac{\log{n}}{n}}}} \\
        &=& O \pth{ m + \frac{n}{m^2} + \frac{\alpha n}{m} +
           n\pth{\frac{\log{n}}{n}}^{3/4} } = O \pth{ m +
           \frac{n}{m^2} + \frac{\alpha n}{m} +
           n^{1/4}\log^{3/4}{n} }.
    \end{eqnarray*}
    Setting $m=\max{\pth{n^{1/3}, \sqrt{n\alpha}}}$, we
    conclude that the expected number of tiles exposed by
    $C$ is $O\pth{n^{1/3} + \sqrt{n\alpha}}$.
    
    The area of $C' =\DCH{\D}(N)$ is bounded from below by
    the area of the tiles which are not exposed by $C'$. The
    probability that $C' \ne C$ (namely, that the disk $B_r$
    is not inside $C'$) is at most $n^{-10}$, by Lemma
    \ref{lemma:big:disk}.  Hence the expected area of $C'$
    is at least
    \[
    E[ Area( C ) ] - Prob \pbrc{ C \ne C' } \pi = \pi -
    O\pth{ n^{1/3} + \sqrt{n\alpha}}\frac{\pi}{n} -
    n^{-10}\pi = \pi -O \pth{n^{-2/3} +
       \sqrt{\frac{\alpha}{n}} \;}.
    \] 
    
    The assertion of the theorem now follows from Lemma
    \ref{lemma:area:to:vertices:ext}.
\end{proof}

The expected complexity of the $\DCH{\D_{xy}}$ of $n$
points, chosen uniformly and independently from the unit
square, is $O(\log{n})$ (Lemma \ref{lemma:vertices:square}).
Unfortunately, this is a degenerate case for a set of
directions with $\alpha(\D) = \pi/2$, as the following
corollary testifies:

\begin{corollary}
    Let $\D_{xy}'$ be the set of directions resulting from
    rotating $\D_{xy}$ by 45 degrees. Let $N$ be a set of
    $n$ points, chosen independently and uniformly from the
    unit square $\SQR$. The expected complexity of
    $\DCH{\D_{xy}'}(N)$ is $\Omega \pth{\sqrt{n}}$.
\end{corollary}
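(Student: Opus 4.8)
The plan is to exhibit a lower bound by showing that the $\DCH{\D_{xy}'}$ of a uniform sample must expose, in expectation, $\Omega(\sqrt{n})$ points along (at least) one of the sides of the square. The key observation is that after rotating $\D_{xy}$ by $45$ degrees, the four directions in $\D_{xy}'$ become the two diagonals of the square $\SQR$. Consequently, the four quadrants in $\QD(\D_{xy}')$ are aligned with the diagonal directions, and when we subtract from the plane all empty translated quadrants, the boundary of $\DCH{\D_{xy}'}(N)$ near a corner of $\SQR$ is forced to ``staircase'' along lines of slope $\pm 1$. The crucial contrast with Lemma \ref{lemma:vertices:square} is that there the staircase runs parallel to the axis-aligned sides of the square, so a single tile of area $1/n$ captures each step and the total count telescopes to $O(\log n)$; here the staircase runs \emph{diagonally} across the axis-aligned sides, so the relevant tiles are the axis-aligned cells cut by a diagonal line, and a diagonal line of slope $1$ crossing the unit square meets $\Theta(\sqrt{n})$ of the $n$ cells of area $1/n$ lying in a strip near the boundary.

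Concretely, I would set up a tiling analogous to the one in Lemma \ref{lemma:vertices:square}, partitioning $\SQR$ into $n^2$ small squares of area $1/n^2$ (or into $n$ diagonal strips of area $1/n$ oriented at $45^\circ$). Fix one side of $\SQR$, say the bottom, and consider the portion of $\bd{\DCH{\D_{xy}'}(N)}$ that is ``supported'' by empty quadrants opening downward. Because the quadrant boundaries have slope $\pm 1$, the lowest exposed cells trace out a monotone diagonal staircase spanning the full width of the square. I would then argue, using the same first-index-of-an-occupied-tile computation as in Theorem \ref{theorem:area} and Lemma \ref{lemma:vertices:square} (namely $E[X_j] = O(1)$ for a column of tiles each of area $\geq 1/n$), that in expectation a constant fraction of the $\Theta(\sqrt{n})$ diagonal positions along the bottom strip are occupied close enough to the boundary to force a vertex of $\DCH{\D_{xy}'}(N)$. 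Summing over the diagonal positions, the expected number of such forced vertices is $\Omega(\sqrt{n})$, and by Lemma \ref{lemma:on:boundary} the complexity of $C = \DCH{\D_{xy}'}(N)$ is proportional to $|N \cap \bd{C}|$, giving the claimed $\Omega(\sqrt{n})$ bound.

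The main obstacle I anticipate is making rigorous the claim that a constant fraction of the diagonal staircase positions actually contribute \emph{distinct} vertices in expectation, rather than being absorbed into long edges. The subtlety is that $\D_{xy}'$-convexity only forbids quadrant directions of slope $\pm 1$, so between two consecutive occupied diagonal cells the boundary can run straight along a single direction; I must ensure the geometry near the corners forces genuine turns. I would handle this by the standard trick used throughout the paper: lower-bound the \emph{expected uncovered area} rather than counting vertices directly. Specifically, I would show that the expected area of $\SQR \setminus \DCH{\D_{xy}'}(N)$ is $\Omega(1/\sqrt{n})$ — each of the $\Theta(\sqrt{n})$ diagonal boundary positions contributes an uncovered triangular sliver of expected area $\Omega(n^{-1})$ near a corner where the diagonal meets the axis-aligned side — and then invoke (the lower-bound direction of the argument behind) Lemma \ref{lemma:area:to:vertices:ext}, which ties a $1 - f(n)$ area deficit to an $\Omega(n f(n))$ vertex count, yielding $\Omega(\sqrt{n})$. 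The heart of the estimate is therefore the elementary geometric fact that a $45^\circ$ line sweeping a boundary strip of an axis-aligned unit square intersects $\Theta(\sqrt{n})$ area-$(1/n)$ cells, which is exactly where this degenerate case diverges from the $O(\log n)$ behavior of the axis-aligned square.
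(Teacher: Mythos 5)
Your geometric intuition is sound, but the pivotal step of your argument rests on a lemma that does not say what you need it to say. You convert an expected area deficit of $\Omega\pth{1/\sqrt{n}}$ into an $\Omega\pth{\sqrt{n}}$ vertex count by invoking ``the lower-bound direction of the argument behind'' Lemma \ref{lemma:area:to:vertices:ext}. No such direction exists: that lemma, like Lemma \ref{lemma:area:to:vertices}, is a one-way Efron-type statement (large expected hull area implies \emph{few} vertices), and its proof --- bounding the vertices contributed by $N_1$ by the points of $N_1$ falling outside the hull of $N_2$ --- yields nothing in reverse. Indeed, the reverse implication is false as a pointwise statement: a sample clustered near the center of $\SQR$ leaves an area deficit close to $1$ while its hull has $O(1)$ complexity. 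What would rescue your route is the exact Efron \emph{identity} for i.i.d.\ samples: by exchangeability, $n\pth{1 - E\pbrc{Area\pth{\DCH{\D_{xy}'}(N')}}}$, where $N'$ is a sample of $n-1$ points, equals the expected number of indices $i$ with $X_i \notin \DCH{\D_{xy}'}\pth{N \setminus \brc{X_i}}$. To use it you must additionally verify (i) that each such $X_i$ lies on $\bd{\DCH{\D_{xy}'}(N)}$ (e.g., by sliding the witnessing empty quadrant so its apex moves just past $X_i$), (ii) that $\DCH{\D_{xy}'}(N) \subseteq \SQR$, so the deficit really equals the probability that a fresh point is uncovered, and (iii) that almost surely each edge of the hull contains at most one sample point (edges have slope $\pm 1$, and two uniform points aligned at slope $\pm 1$ is a null event), so that many boundary points force proportionally many vertices --- note that Lemma \ref{lemma:on:boundary} only bounds complexity \emph{above} by boundary points. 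None of these steps appears in your writeup, and together they are the real content of the step you waved through; as written, the proof has a genuine gap exactly where all the work happens.

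For contrast, the paper proves the corollary by a direct local argument that needs no area accounting at all. It tiles $\SQR$ with $n$ cells of area $1/n$, and for each cell $\S_j$ in the top row considers the event $A_j$ that $\S_j$ contains a point of $N$ while its two horizontal neighbors $\S_{j-1}, \S_{j+1}$ are empty; $Prob\pbrc{A_j}$ tends to $e^{-2}-e^{-3} \approx 0.0855$, so $\Omega\pth{\sqrt{n}}$ of these events occur in expectation. When $A_j$ occurs, translating the upward quadrant $Q_{top} \in \QD(\D_{xy}')$ to the topmost sample point of $\S_j$ gives an empty quadrant (its slope-$\pm 1$ boundary stays within the three top-row cells before leaving $\SQR$), so that point is a vertex of $\DCH{\D_{xy}'}(N)$, and distinctness is automatic --- one vertex per cell. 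This sidesteps both of your anticipated difficulties. Your uncovered-area estimate itself ($\Theta\pth{\sqrt{n}}$ disjoint top-row cells, each exposing a sliver of expected area $\Omega\pth{1/n}$ with constant probability) is correct and could be completed into a proof via the identity sketched above, but you should either supply steps (i)--(iii) or switch to the direct counting argument.
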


\begin{proof}
    Without loss of generality, assume that $n=m^2$ for some
    integer $m$.  Tile $\SQR$ with $n$ translated copies of
    a square of area $1/n$.  Let $\S_1, \ldots, \S_m$ denote
    the squares in the top raw of this tiling, from left to
    right. Let $A_j$ denote the event that $\S_j$ contains a
    point of $N$, and neither of the two adjacent squares
    $S_{j-1}, S_{j+1}$ contains a point of $N$, for $j=2,
    \ldots, m-1$.

    We have
    \[
    Prob\pbrc{A_j} = Prob \pbrc{ \S_{j+1} \cap N = \emptyset
       \text{ and }\S_{j-1} \cap N = \emptyset } - Prob
    \pbrc{ (\S_{j-1} \cup \S_j \cup \S_{j+1}) \cap N =
       \emptyset },
    \]
    for $j=2, \ldots, m-1$. Hence,
    \[
    \lim_{n\rightarrow \infty} Prob\pbrc{A_j} =
    \lim_{n\rightarrow \infty } \pth{ \pth{1 -
          \frac{2}{n}}^{n} - \pth{1 - \frac{3}{n}}^{n} } =
    e^{-2} - e^{-3} \approx 0.0855
    \]
    
    \begin{figure}
        \centerline{
           \def\IPEfile{figs/squares.ipe}\input{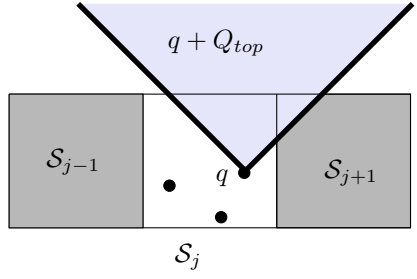}
           }
        \caption{If $A_j$ happens, then the squares
           $\S_{j-1}, \S_{j+1}$ do not
           contain a point of $N$. Thus, if $q$ is the
           highest point in $\S_j$, then $q+Q_{top}$ can not
           contain a point of $N$, and $q$ is a vertex of
           $\DCH{\D_{xy}'}(N)$.}
        
        \label{fig:squares}
    \end{figure}
    
    This implies, that for $n$ large enough, $Prob
    \pbrc{A_j} \geq 0.01$.  Thus, the expected value of $Y$
    is $\Omega(m) = \Omega \pth{ \sqrt{n}}$, where $Y$ is
    the number of $A_j$'s that have occurred, for
    $j=2,\ldots, m-1$.  However, if $A_j$ occurs, then $C
    =\DCH{\D_{xy}'}(N)$ must have a vertex at $\S_j$.
    Indeed, let $Q_{top}$ denote the quadrant of
    $\QD(\D_{xy}')$ that contains the positive $y$-axis. If
    we translate $Q_{top}$ to the highest point in $S_j \cap
    N$, then it does not contain a point of $N$ in its
    interior, implying that this point is a vertex of $C$,
    see Figure \ref{fig:squares}.
    
    This implies that the expected complexity of
    $\DCH{\D_{xy}'}(N)$ is $\Omega \pth{\sqrt{n}}$
\end{proof}


\section[Result]{On the Expected Number of Points on the Boundary of the
   Quadrant Hull Inside a Hypercube}
\label{sec:hcube}

In this section, we show that the expected number of points
on the boundary of the quadrant hull of a set $S$ of $n$
points, chosen uniformly and independently from the unit
cube is $O(\log^{d-1}n)$.  Those bounds are known
\cite{bkst-anmsv-78}, but we believe the new proof is
simpler.

\begin{definition}[\cite{mp-ofsch-97}]
    Let $\Q$ be a family of subsets of $\Re^d$. For a set $A
    \subseteq \Re^d$, we define the $\Q$-hull of $A$ as
    \[
    \QHull{\Q}(A) = \bigcap\brc{ Q \in \Q \sep{ A \subseteq
          Q }}.
    \]
\end{definition}

\begin{definition}[\cite{mp-ofsch-97}]
    For a sign vector $s \in \brc{-1, +1}^d$, define
    \[
    q_s = \brc{ x \in \Re^d \sep{ \sign(x_i ) = s_i,
          \text{ for } i=1, \ldots, d }},
    \]
    and for $a \in \Re^d$, let $q_s(a) =q_s + a$. We set
    $\Q_{sc} = \brc{ \Re^d \setminus q_s(a) \sep{ a \in
          \Re^d, s \in \brc{-1,+1}^d}}$. We shall refer to
    $\QH(A)$ as the {\em quadrant hull} of $A$. These are
    all points which cannot be separated from $A$ by any
    open orthant in space (i.e., quadrant in the plane).
\end{definition}

\begin{definition}
    Given a set of points $S \subseteq \Re^d$, a point $p
    \in \Re^d$ is {\em $\QSC$-exposed}, if there is $s \in
    \brc{-1,+1}^d$, such that $q_s(p) \cap S = \emptyset$. A
    set $C$ is {\em $\QSC$-exposed}, if there exists a
    point $p\in C$ which is $\QSC$-exposed.
\end{definition}

\begin{definition}
    For a set $S \subseteq \Re^d$, let $n_{sc}(S)$ denote
    the number of points of $S$ on the boundary of $\QH(S)$.
\end{definition} 

\begin{theorem}
    Let $\C$ be a unit axis parallel hypercube in $\Re^d$,
    and let $S$ be a set of $n$ points chosen uniformly and
    independently from $\C$. Then, the expected number of
    points of $S$ on the boundary of $H = \QH(S)$ is
    $O(\log^{d-1}(n))$.
    \label{theorem:main}
\end{theorem}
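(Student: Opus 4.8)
The plan is to bound not the boundary points directly but the (larger) set of $\QSC$-exposed points of $S$, and then to reduce that count, via a union bound over the $2^d$ orthant directions, to the classical problem of counting maxima. First I would verify the geometric claim that every point $p \in S$ lying on $\bd H$ is $\QSC$-exposed. If $p$ were \emph{not} exposed, then every translated orthant $q_s(p)$ meets $S$; but for any ``bad'' quadrant $\Re^d \setminus q_s(a) \in \QSC$ (one that contains $S$, i.e.\ $q_s(a)\cap S=\emptyset$) the inclusion $q_s(p)\subseteq q_s(a)$ holds whenever $p\in\overline{q_s(a)}$, and this would force $q_s(p)\cap S=\emptyset$, a contradiction. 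Hence $p\notin\overline{q_s(a)}$ for every bad quadrant, and since $S$ is finite only finitely many combinatorially distinct bad quadrants are relevant, so a whole neighborhood of $p$ lies in $H$, i.e.\ $p\in\inter H$ --- contradicting $p\in\bd H$. Therefore $E[\,|S\cap\bd H|\,]\le E[\#\{\QSC\text{-exposed points}\}]$.

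By linearity of expectation and a union bound over the $2^d$ sign vectors $s$, this expectation is at most $2^d$ times the expected number of $s$-maxima for a single fixed $s$; by the reflection symmetry of the cube I may take $s=(+1,\dots,+1)$, so the relevant points are the \emph{maxima}: those $p$ not strictly dominated in every coordinate by another point of $S$. Writing $M_d(n)$ for the expected number of such maxima, and since $d$ is a constant, it suffices to prove $M_d(n)=O(\log^{d-1}n)$.

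The heart of the argument is a one-line recurrence for $M_d(n)$. I would sort the points by decreasing $d$-th coordinate as $q_1,\dots,q_n$ (coordinates are a.s.\ distinct). A point $q_k$ is a maximum in $\Re^d$ exactly when none of $q_1,\dots,q_{k-1}$ --- all of which already have a larger $d$-th coordinate --- dominates it in the first $d-1$ coordinates; equivalently, the projection $q_k'$ of $q_k$ onto the first $d-1$ coordinates is a maximum of the projected set $\{q_1',\dots,q_k'\}$. Because the projections are i.i.d.\ uniform in $[0,1]^{d-1}$ and independent of the ordering induced by the last coordinate, conditioning on that ordering leaves $q_1',\dots,q_k'$ i.i.d.\ uniform, so by exchangeability $\Pr[q_k\text{ is a maximum}]=M_{d-1}(k)/k$. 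Summing over $k$ yields
\[
M_d(n)=\sum_{k=1}^{n}\frac{M_{d-1}(k)}{k}.
\]
With the base case $M_1(n)=1$, an easy induction --- bounding $\sum_{k\le n}(\ln k)^{d-2}/k$ by $\int_1^n (\ln x)^{d-2}/x\,dx=(\ln n)^{d-1}/(d-1)$ --- gives $M_d(n)=O(\log^{d-1}n)$, which completes the proof.

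I expect the main obstacle to be the first step: extracting precisely, from the intersection definition of $\QH$, that a point of $S$ on $\bd H$ must be $\QSC$-exposed, and in particular making the ``only finitely many relevant bad quadrants'' reasoning rigorous so that a non-exposed point is genuinely interior. The probabilistic recurrence itself, once one observes that conditioning on the last-coordinate order leaves the projected points i.i.d.\ uniform, is routine.
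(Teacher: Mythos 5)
Your proof is correct, but it takes a genuinely different route from the paper's. The paper never talks about maxima recurrences: it tiles $\C$ into $n^d$ cells of volume $1/n^d$, bounds the expected number of cells exposed by a fixed orthant by $O\pth{n^{d-1}\log^{d-1}n}$ (via the coordinatewise variables $X(i_2,\ldots,i_d)$, whose tails decay geometrically in units of $m=\ceil{n^{d-1}/((i_2-1)\cdots(i_d-1))}$), sums over the $2^d$ orthants to get $E[\Vol(H)]\geq 1-O\pth{\log^{d-1}(n)/n}$, and then converts the volume bound into the boundary-point bound by the two-sample Efron argument of Lemma \ref{lemma:area:to:vertices}. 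You instead bound boundary points by $\QSC$-exposed points, split over the $2^d$ sign vectors, and invoke the classical recurrence $M_d(n)=\sum_{k\leq n}M_{d-1}(k)/k$ for expected maxima; that recurrence (conditioning on the order by the last coordinate, noting the projections stay i.i.d.\ uniform) is sound, but it is essentially the original argument of \cite{bkst-anmsv-78} --- the very proof this paper set out to replace with a tiling argument. Note also that your logical direction is reversed relative to the paper: the paper derives the maxima bound as a corollary of the theorem (see the closing Remark), whereas you derive the theorem from the maxima bound. What each buys: your route is shorter, skips both the tiling and the Efron step, and works verbatim for any product distribution with continuous marginals; the paper's route produces the volume lower bound as a by-product and keeps the whole paper within a single uniform technique (the same one used for the disk and polygon in Section \ref{sec:random:ch}).

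The step you flagged as the main obstacle does hold, and can be closed without any compactness or ``finitely many bad quadrants'' language: if $p\in S$ is not $\QSC$-exposed, pick for each $s\in\brc{-1,+1}^d$ a witness $y^{(s)}\in q_s(p)\cap S$ and set $\delta=\min_s\min_i |y^{(s)}_i-p_i|>0$. If some $x$ with $\|x-p\|_\infty<\delta$ lay in a bad quadrant $q_s(a)$ (one with $q_s(a)\cap S=\emptyset$), then $x\in\overline{q_s(a)}$ would give $q_s(x)\subseteq q_s(a)$, while $y^{(s)}\in q_s(x)$ by the choice of $\delta$ --- a contradiction. Hence the $\delta$-neighborhood of $p$ lies in $H$ and $p\in\inter{H}$; the only finiteness used is over the $2^d$ sign vectors, exactly the same finiteness your union bound already needs.
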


\begin{proof}
    We partition $\C$ into equal size tiles, of volume
    $1/n^d$; that is $C(i_1, i_2, \ldots, i_d) = [(i_1-1)/n,
    i_1/n] \times \cdots \times [(i_d - 1)/n, i_d/n]$, for
    $1 \leq i_1, i_2, \ldots, i_d \leq n$.
    
    We claim that the expect number of tiles in our
    partition of $\C$ which are exposed by $S$ is $O(n^{d-1}
    \log^{d-1}n)$.
    
    Indeed, let $q = q_{(-1, -1, \ldots, -1)}$ be the
    ``negative'' quadrant of $\Re^d$. Let $X(i_2, \ldots,
    i_d)$ be the maximal integer $k$, for which $C(k,i_2,
    \ldots, i_d)$ is exposed by $q$.  The probability that
    $X(i_2, \ldots, i_d) \geq k$ is bounded by the
    probability that the cubes $C(l_1, l_2, \ldots, l_d)$
    does not contain a point of $S$, where $l_1 < k, l_2<
    i_2, \ldots, l_d < i_d$. Thus,
    \begin{eqnarray*}
        \Pr \pbrc{ X(i_2, \ldots, i_d) \geq k} &\leq& \pth{
           1 - \frac{(k-1)(i_2
              -1) \cdots (i_d-1)}{n^d}}^n \\
        &\leq& \exp \pth{ -\frac{{(k-1)(i_2 - 1) \cdots
                 (i_d-1)}}{n^{d-1}}},
    \end{eqnarray*}
    since $1-x \leq e^{-x}$, for $x \geq 0$.
    
    Hence, the probability that $\Pr\pbrc{X(i_2, \ldots,
       i_d) \geq i\cdot m + 1} \leq e^{-i}$, where\\ $m =
    \ceil{\frac{n^{d-1}}{(i_2-1) \cdots (i_d-1)}}$. Thus,
    \begin{eqnarray*}
        E \pbrc{ X(i_2, \ldots, i_d) } &=&
        \sum_{i=1}^{\infty} i \Pr\pbrc{X(i_2, \ldots, i_d) =
           i } = \sum_{i=0}^{\infty} \sum_{j=im+1}^{(i+1)m}
        j \Pr\pbrc{X(i_2, \ldots, i_d) = j }\\
        &\leq& \sum_{i=0}^{\infty} (i+1)m \Pr \pbrc{ X(i_2,
           \ldots, i_d) \geq im + 1} \leq
        \sum_{i=0}^{\infty} (i+1)me^{-i} = O(m).
    \end{eqnarray*}
    
    Let $r$ denote the expected number of tiles exposed by
    $q$ in $\C$.  If $C(i_1, \ldots, i_d)$ is exposed by
    $q$, then $X(i_2, \ldots, i_d) \geq i_1$. Thus, one can
    bound $r$ by the number of tiles on the boundary of
    $\C$, plus the sum of the expectations of the variables
    $X(i_2, \ldots, i_d)$. We have
    \begin{eqnarray*}
        r &=& O(n^{d-1}) + \sum_{i_2=2}^{n-1}
        \sum_{i_3=2}^{n-1} \cdots \sum_{i_d=2}^{n-1} O \pth{
           \frac{n^{d-1}}{(i_2 - 1)(i_3 -1 )
              \cdots(i_d - 1)}} \\
        &=& O \pth{ n^{d-1}} \sum_{i_2=2}^{n-1}
        \frac{1}{i_2-1}\sum_{i_3=2}^{n-1} \frac{1}{i_3-1}
        \cdots \sum_{i_d=2}^{n-1} \frac{1}{i_d-1} = O \pth{
           n^{d-1} \log^{d-1}{n}}.
    \end{eqnarray*}
    
    The set $\Q_{sc}$ contains translation of $2^d$
    different quadrants.  This implies, by symmetry, that
    the expected number of tiles exposed in $\C$ by $S$ is
    $O \pth{ 2^dn^{d-1} \log^{d-1}{n}} = O \pth{n^{d-1}
       \log^{d-1}{n}}$. However, if a tile is not exposed by
    any $q_s$, for $s \in \brc{-1,+1}^d$, then it lies in
    the interior of $H$. Implying that the expected volume
    of $H$ is at least
    \[
    \frac{n^d - O\pth{n^{d-1} \log^{d-1}{n}}}{n^d} = 1 -
    O\pth{\frac{\log^{d-1} n}{n}}.
    \]
    
    We now apply an argument similar to the one used in
    Lemma \ref{lemma:area:to:vertices} (Efron's Theorem),
    and the theorem follows.
\end{proof}

\remove{    
    We are now in position to apply an argument similar to
    the one used in Lemma \ref{lemma:area:to:vertices}:
    Partition $S$ into two equal size sets $S_1, S_2$.  We
    have that $n_{sc}(S)$ is bounded by the number of points
    of $S_1$ falling outside $H_2 = \QH(S_2)$, plus the
    number of points of $S_2$ falling outside $H_1 =
    \QH(S_1)$, where $n_{sc}(S)$ is the number of points of
    $S$ on the boundary of $\QH(S)$. We conclude,
    \begin{eqnarray*}
        E[ n_{sc}(S) ] &\leq& E \pbrc{ |S_1 \setminus H_2 |
           } + E \pbrc{ |S_2 \setminus H_1|} = E\pbrc{
           \rule{0cm}{0.5cm} E \pbrc{ |S_1 \setminus H_2|
              \sep{ S_2 } }} + E \pbrc{ \rule{0cm}{0.5cm} E
           \pbrc{ |S_2
              \setminus H_1| \sep{ S_1 } } }\\
        &=& E\pbrc{ \frac{n}{2}\pth{1 - \Vol(H_2)} +
           \frac{n}{2}\pth{1 - \Vol(H_1)}} = O \pth{ n \cdot
           \frac{\log^{d-1} n}{n}} = O( \log^{d-1} n),
    \end{eqnarray*}
    and the theorem follows.
}

\begin{remark}
    A point $p$ of $S$ is a {\em maxima}, if there is no
    point $p'$ in $S$, such that $p_i \leq p'_i$, for
    $i=1,\ldots, d$. Clearly, a point which is a maxima, is
    also on the boundary of $\QH(S)$. By Theorem
    \ref{theorem:main}, the expected number of maxima in a
    set of $n$ points chosen independently and uniformly
    from the unit hypercube in $\Re^d$ is $O(\log^{d-1} n)$.
    This was also proved in \cite{bkst-anmsv-78}, but we
    believe that our new proof is simpler.
    
    Also, as noted in \cite{bkst-anmsv-78}, a vertex of the
    convex hull of $S$ is a point of $S$ lying on the
    boundary of the $\QH(S)$. Hence, the expected
    number of vertices of the convex hull of a set of $n$
    points chosen uniformly and independently from a
    hypercube in $\Re^d$ is $O(\log^{d-1} n)$.
\end{remark}


\subsection*{Acknowledgments}

I wish to thank my thesis advisor, Micha Sharir, for his
help in preparing this manuscript. I also wish to thank
Pankaj Agarwal, and Imre B{\'a}r{\'a}ny for helpful
discussions concerning this and related problems.


\bibliographystyle{salpha}
\bibliography{geometry}

\end{document}